\documentclass{article}

\usepackage{caption}

\newcommand{\ours}{GeoCoupling}

\newcounter{mycounter}

\newcommand{\ul}[1]{{\underline{#1}}}

\usepackage{paralist}
\usepackage{tablefootnote}
\usepackage{multirow, multicol}

\usepackage{amsmath,amsfonts,bm}

\def\eqref#1{equation~\ref{#1}}

\def\1{\bm{1}}

\def\ry{{\textnormal{y}}}

\def\rve{{\mathbf{e}}}

\def\rvu{{\mathbf{i}}}

\def\rvt{{\mathbf{t}}}
\def\rvu{{\mathbf{u}}}

\def\rvx{{\mathbf{x}}}

\def\vh{{\bm{h}}}

\def\vr{{\bm{r}}}

\DeclareMathAlphabet{\mathsfit}{\encodingdefault}{\sfdefault}{m}{sl}
\SetMathAlphabet{\mathsfit}{bold}{\encodingdefault}{\sfdefault}{bx}{n}

\newcommand{\E}{\mathbb{E}}

\newcommand{\Var}{\mathrm{Var}}

\usepackage[ruled, vlined, boxed]{algorithm2e}
\usepackage[table]{xcolor}

\usepackage{microtype}
\usepackage{graphicx}
\usepackage{subcaption}
\usepackage{booktabs} 

\usepackage{hyperref}

\usepackage[accepted]{icml2026}

\usepackage{amsmath}
\usepackage{amssymb}
\usepackage{mathtools}
\usepackage{amsthm}

\usepackage[capitalize,noabbrev]{cleveref}

\theoremstyle{plain}
\newtheorem{theorem}{Theorem}[section]
\newtheorem{proposition}[theorem]{Proposition}

\theoremstyle{definition}
\newtheorem{definition}[theorem]{Definition}

\theoremstyle{remark}

\usepackage[textsize=tiny]{todonotes}

\icmltitlerunning{Demystifying Multimodal Biomolecular Co-design with Intrinsic Geodesic Coupling}

\begin{document}

\twocolumn[
  \icmltitle{Demystifying Multimodal Biomolecular Co-design \\ with Intrinsic Geodesic Coupling}



  \icmlsetsymbol{equal}{*}

  \begin{icmlauthorlist}
\icmlauthor{Keyue Qiu}{equal,air,thu}
\icmlauthor{Xintong Wang}{equal,thu}
\icmlauthor{Zhilong Zhang}{air,qzc}
\icmlauthor{Hao Zhou}{air,shanghai}
\icmlauthor{Wei-Ying Ma}{air}
  \end{icmlauthorlist}

\icmlaffiliation{thu}{Department of Computer Science and Technology, Tsinghua University}
\icmlaffiliation{air}{Institute for AI Industry Research (AIR), Tsinghua University}
\icmlaffiliation{shanghai}{Shanghai Artificial Intelligence Laboratory}
\icmlaffiliation{qzc}{Qiuzhen College, Tsinghua University}

\icmlcorrespondingauthor{Hao Zhou}{zhouhao@air.tsinghua.edu.cn}

  \icmlkeywords{Machine Learning, ICML}

  \vskip 0.3in
]



\printAffiliationsAndNotice{\icmlEqualContribution}

\begin{abstract}
Biomolecules such as proteins and small-molecule ligands play a central role in biological systems, arising from the tight interplay between sequence and three-dimensional structure.
Recent generative models for biomolecular co-design aim to capture this interplay by jointly modeling coupled modalities.
However, existing approaches largely adopt a parallel execution of marginal generative processes, implicitly enforcing fixed synchronous coupling.
We argue that a critical but overlooked degree of freedom lies in how these marginal processes are \emph{temporally coupled} during training and generation, where inappropriate coupling can introduce high-variance supervision and inconsistent intermediate states, affecting modality consistency.
To address this, we introduce \ours, a systematic framework that optimizes for temporal couplings between heterogeneous modalities.
Empirical results across structure-based drug design and unconditional protein design demonstrate the learned couplings consistently outperform synchronous and randomly coupled baselines, yielding biomolecules with improved physical validity and diversity.

\end{abstract}

\section{Introduction}


Functional biomolecules, ranging from proteins to small-molecule ligands, are the fundamental building blocks of life, governing biological processes from metabolic catalysis to signal transduction \cite{branden2012introduction}. Their biological utility is intrinsically linked to the relationship between sequences and three-dimensional structures. Consequently, the joint design of biomolecular sequences and structures has emerged as a central challenge in computational biology, with transformative potential for protein engineering and de novo drug design \cite{huang2016coming, watson2023novo}.


Recently, deep generative models have made remarkable progress in biomolecular design. Early approaches focused on unimodal generation for sequences \cite{madani2020progen, nijkamp2023progen2, wang2024diffusion} or structures \cite{watson2023novo, yim2023se3, bosefoldflow}. However, practical discovery tasks such as structure-based drug design (SBDD) \cite{schneuing2024structure} require determining sequence and structure jointly and capturing holistic cross-modal dependencies. 
Adapting these unimodal approaches for joint generation relies on sequential pipelines that chained separate tools for structure prediction \citep{jumper2021highly, abramson2024accurate} or inverse folding \cite{dauparas2022robust}.
Although effective as heuristics, these disjoint workflows fail to exploit the biological insight such as induced fit in ligand binding or the co-adaptation between backbone geometry and sequence identity. 
To address this, recent works have shifted toward multimodal co-design, treating sequence and structure as coupled modalities within a unified generative process.



A central challenge in multimodal co-design lies in the \emph{dynamic mismatch} between heterogeneous modalities. Different biological modalities exhibit distinct intrinsic characteristics: continuous 3D coordinates evolve on high-dimensional geometric manifolds, whereas sequences reside in combinatorial discrete space. Most existing co-design approaches for coupled modalities impose a \emph{synchronous coupling}, forcing all modalities to evolve from noise to data at shared timestep. While convenient, this assumption introduces a restrictive inductive bias by ignoring the dependency structure and asymmetric convergence behaviors across modalities. Empirically, one modality may stabilize earlier than another, causing the model to condition on poorly resolved or inconsistent representations during intermediate stages. This misalignment forces the model to learn from off-manifold intermediate states, leading to noisy or even conflicting gradient signals, and degraded modality consistency for biomolecules.



\begin{figure*}
    \centering
    \includegraphics[width=\linewidth]{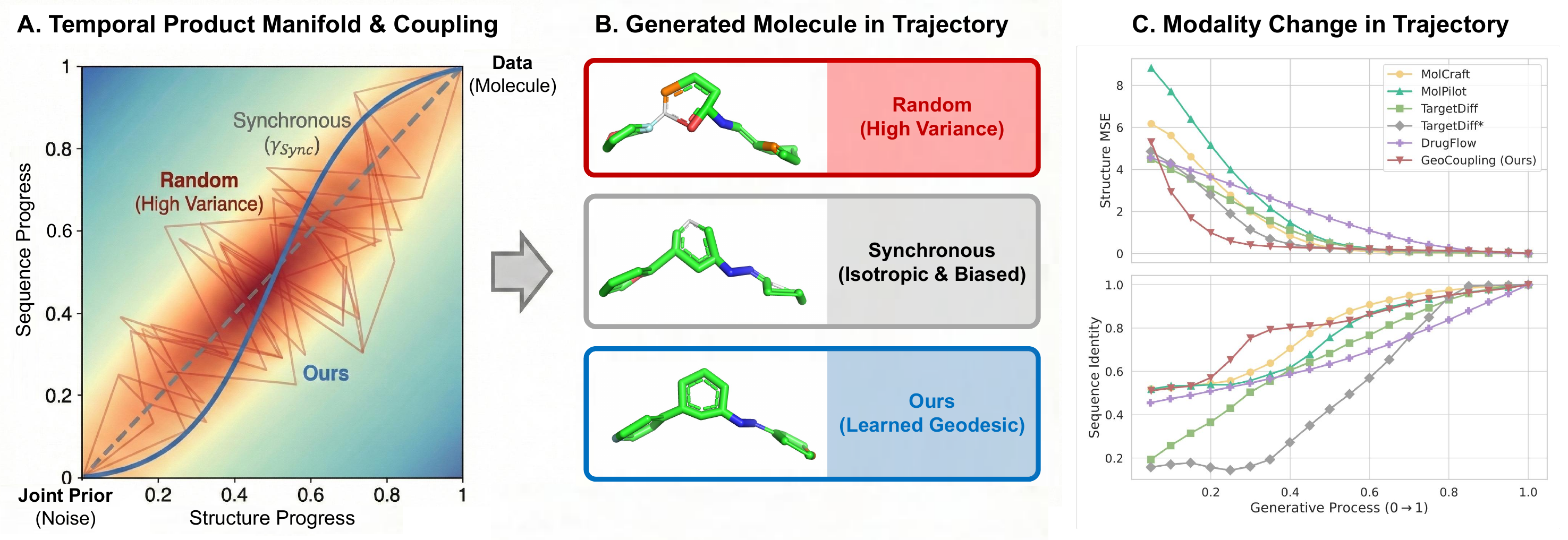}
    \caption{\textbf{Decoupling Multimodal Generation via Learned Geodesics.} \textbf{A.} Temporal Product Manifold: Visualization of coupling strategies on the joint noise space. While Synchronous coupling (gray) forces an isotropic diagonal path and Random coupling (red) suffers from high variance , GeoCoupling (blue) discovers a learned geodesic
  that navigates the optimal transport path. \textbf{B.} Conceptual illustration for the intermediate states of generated molecules, where random couplings often lead to crossmodal inconsistency, whereas our learned path ensures geometric validity. \textbf{C.} Trajectory Analysis: Quantitative evolution of Structure MSE (top) and Sequence Identity (bottom) during generation ($0\to 1$). GeoCoupling (red triangles) accelerates convergence compared to baselines. Notably, the rapid drop in structure error ($t < 0.3$) precedes the surge in sequence identity, confirming a structure-first generative mechanism.}
    \label{fig:main}
\end{figure*}

Recent work has attempted to relax strict synchronization through random coupling strategies for protein data \citep{pmlr-v235-campbell24a, wang2025dplm2}. While such approaches broaden the training distribution, they often exacerbate training variance and introduce a training--inference mismatch: models are supervised on decoupled noise combinations that are rarely encountered during synchronized generation. Notably, \citet{qiupiloting} showed that optimizing coupling schedules at test time can improve model performance in structure-based drug design. However, such post-hoc procedures require repeated evaluation across modality combinations, leading to rapidly growing computational overhead, while leaving the instability of the training dynamics fundamentally unaddressed. These observations suggest that the limitation of existing co-design methods fundamentally relates to its multimodal training dynamics, where biomolecular co-design operates on a heterogeneous product manifold. On this manifold, the optimal generative trajectory is rarely the diagonal path that synchronizes all modalities uniformly. Instead, effective generation usually follows a dynamically curved path, i.e., a \emph{geodesic} defined by the transport cost of the generative flow—along which different components mature at rates dictated by their respective learning complexities.

In this work, we present \textsc{GeoCoupling}, a framework that treats temporal coupling as a learnable degree of freedom. We reformulate multimodal generation as a \emph{Temporal Optimal Transport} problem, seeking the schedule that minimizes the energy required to transport the prior to the data distribution. Finding this optimal schedule presents a unique optimization challenge: the objective function is the final validation performance after training, making it non-differentiable with respect to the schedule and expensive to query. Crucially, backpropagating through the entire training trajectory to learn the schedule via hypergradients is computationally intractable. To address this, \textsc{GeoCoupling} employs a bi-level optimization strategy with a Gaussian Process surrogate to efficiently navigate the coupling space. This design enables the automatic discovery of complexity-aware curricula that align generative progress with modality-specific difficulty, while incurring negligible additional cost at training and inference time.


Our contributions are as follows:
\begin{itemize}
    \item We identify dynamic mismatch between heterogeneous modalities as a fundamental limitation of existing co-design models, showing that synchronous and random couplings are prone to high bias and high variance, respectively.
    \item We propose a model-agnostic Bayesian optimization framework that learns low-energy temporal couplings during training, providing a principled alternative to heuristic schedules for multimodal generative models.
    \item We demonstrate that the learned couplings consistently outperform synchronous and asynchronous baselines on challenging benchmarks, including structure-based drug design and unconditional protein generation, producing biomolecules with improved physical validity and structural diversity.
\end{itemize}

\section{Related Works}

\paragraph{Protein Design}
Existing approaches can be broadly categorized into three paradigms.
(1) \textit{Unimodal generative models} focus on a single modality: sequence-based methods include autoregressive models such as ProGen \cite{madani_progen_2020, nijkamp2023progen2} and diffusion language model DPLM \cite{wang2024diffusion}, while structure-based models are dominated by non-autoregressive generation such as RFdiffusion \cite{watson2023novo}, FrameDiff \cite{yim2023se3} and FoldFlow \citep{bosefoldflow}.
(2) \textit{Cross-modal conditional generation} tackles translation between modalities: sequence-to-structure tasks (folding) are addressed by AlphaFold \cite{jumper2021highly}, whereas structure-to-sequence tasks (inverse folding) are handled by methods such as ProteinMPNN \cite{dauparas2022robust} and ESM-IF \cite{hsu_learning_2022}.
(3) These successes have unlocked the potential for \textit{multi-modal joint generation}, treating sequence and structure as coupled modalities to enable holistic co-design. Representative methods include models MultiFlow \cite{pmlr-v235-campbell24a} that operate directly in the sample space, La-Proteina \cite{geffner2026laproteina} that relies on VAE for the protein latent space, and DPLM-2 \cite{wang2025dplm2} that utilizes a structure tokenizer. Within the co-design paradigm, the fundamental challenge lies in capturing the intricate relationship between sequence and structure and ensuring better mutual consistency.

\paragraph{Structure-based Drug Design}
Structure-Based Drug Design (SBDD) focuses on generating high-affinity ligands within specific protein binding pockets. Early autoregressive approaches, such as Pocket2Mol \cite{peng_pocket2mol_2022}, employed a step-by-step atom growth strategy. Subsequently, geometric diffusion models \cite{schneuing2024structure, guan_3d_2023} achieved joint denoising of both atomic types and coordinates. MolCRAFT \cite{qu2024molcraft} introduced Bayesian Flow Networks (BFN) \cite{graves2023bayesian} to generate atomic types and positions within a unified diffusion probabilistic framework in parameter space. Building on this, MolPilot \cite{qiupiloting} incorporated Variational Lower Bound Optimal Scheduling (VOS) to navigate multimodal path dependencies. Given that atomic types and positions reside in discrete and continuous modalities respectively, formulating an optimal joint generative path remains a pivotal challenge.

\paragraph{Optimal Transport.}
Optimal Transport (OT) provides a rigorous geometric framework for comparing probability distributions and transporting mass with minimal cost~\citep{villani2008optimal}.
Recent flow-based generative models~\citep{lipman2022flow, liu2022flow} leverage this dynamic view to rectify flow paths, resulting in straight sample trajectories. \citet{song2024equivariant} proposes equivariant optimal transport (EOT), advancing the concept of sample distance in the field of molecular generation.
Existing works largely focus on \textbf{sample coupling} (i.e., pairing $x_0$ and $x_1$	
  in the data space) to reduce transport variance. In contrast, our work investigates \textbf{temporal coupling} on the multimodal product manifold that is orthogonal to the sample space coupling, aiming to minimize the transport cost along the entire generative trajectory.

\section{Methodology}

We consider the problem of \emph{multimodal generative modeling} for biomolecules, where heterogeneous modalities such as discrete sequences and continuous 3D structures must be generated in a coordinated manner.
A key difficulty lies in the fact that different modalities exhibit fundamentally different noise sensitivities and convergence rates during generative training.
As a result, commonly used coupling strategies either
(i) enforce a rigid synchronous evolution that misaligns modality-specific progress, or
(ii) rely on random decoupling that introduces high training variance and destabilizes learning.

Our core goal is therefore to {learn a modality-aware temporal coupling} that aligns the generative trajectory with the intrinsic difficulty of each modality, improving both training stability and cross-modal consistency.
To this end, we propose \ours, a principled framework that formulates multimodal generation as a \emph{Temporal Optimal Transport} (TOT) problem over the time domain.
This formulation enables us to treat the coupling between modalities as a geometric object to be optimized, rather than a fixed heuristic.


\paragraph{Overview.}
We first lay the groundwork from the perspective of Optimal Transport (OT) in the temporal domain (Sec. \ref{subsec:temporal_ot}). We then describe our proposed bi-level optimization objective to effectively minimize the transport cost (Sec. \ref{subsec:bilevel}) and the Bayesian optimization strategy for practically solving the geodesic coupling\footnote{Strictly speaking, the learned path is a minimum cost path over a cost surface learned by the Gaussian Process (GP) surrogate. We use the term ``geodesic'' in a generalized sense to represent the optimal trajectory minimizing the cumulative transport energy on the temporal product manifold.} optimization (Sec. \ref{subsec:bo}).



\subsection{Temporal Optimal Transport on Product Manifolds} \label{subsec:temporal_ot}

\paragraph{Notation.}
Multimodal biomolecular generation operates on a product manifold of heterogeneous modalities, such as continuous structure expressed by Cartesian coordinates \(\vr \in \mathbb{R}^{N \times 3}\) and discrete sequence \(\vh \in \mathbb{R}^{N \times K}\). (e.g., types of residues or atoms), with $N$ the dimension of atoms or residues and \(K\) the dimension of node feature space. 

\paragraph{Generative Process.}
The generative goal is to learn a time-dependent vector field \(v_t\) that transports a factorized prior \(\pi_0 = p(\vr) \otimes p(\vh)\) to the joint data distribution \(\pi_1 = p_{\text{data}}(\vr, \vh)\).
A fundamental challenge is that these modalities exhibit distinct noise sensitivities and convergence rates.
To perform a unified analysis of different training objectives, we introduce the definition of \textit{temporal coupling} as a monotonic time-warping function $\gamma \in \Gamma$, mapping a global progress variable $\tau\in [0,1]$ to local schedules $\boldsymbol{\gamma}(\tau) = [t_r(\tau), t_h(\tau)]$ for each modality,
where $t_r$ and $t_h$ represent the noise levels for structure and sequence, respectively. This allows the joint probability distribution to factorize dynamically along the noised trajectory $p_\tau^\gamma$:
\begin{equation}
    p_\tau^{\text{joint}}(\vr, \vh; \boldsymbol{\gamma}) =  p_{t_r(\tau)}(\vr) \cdot p_{t_h(\tau)}(\vh).
\end{equation}
Here, $\boldsymbol{\gamma}$ acts as a dynamic coupling, capturing the temporal dependency structure.

However, traditional methods have extended the factorization to the learned generative path, implicitly imposing an \textit{isotropic synchronous coupling}, forcing all modalities to to update at a uniform time spacing w.r.t. the marginal noise schedules.
Recently, advanced methods introduce a \textit{random coupling} that samples the timestep for each modality independently from $[0,1]^2$ during training \citep{pmlr-v235-campbell24a, wang2025dplm2, qiupiloting}, enriching the supervision signal and mitigating the potential exposure bias of implicit synchronous timesteps. However, the effectiveness of different coupling methods remain under-explored.

To understand how coupling strategies affect the generative process, we visualize the temporal dynamics of different methods in structure-based drug design in Fig.~\ref{fig:main}C, showing the evolution of structure mean squared error (MSE, Top) and sequence identity (Bottom) throughout the generation process (0→1), illustrating the high variance brought by random coupling in multimodal biomolecule generation:
For diffusion-based TargetDiff \citep{guan_3d_2023},
under an isotropic (synchronous) coupling, the structure modality remains inaccurate for a long portion of the trajectory, whereas an asynchronous coupling (TargetDiff*) that reallocates time budget can reduce structural error much earlier, and enjoys a higher validity rate.
For BFN-based MolCRAFT \citep{qu2024molcraft} and MolPilot \citep{qiupiloting}, the random coupling disrupts the convergence of structure generation, and prioritizes sequences.

This motivates treating the temporal coupling as a first-class design variable: instead of fixing a global time
for all modalities, we opt to learn a complexity-aware schedule $\gamma(\tau)=(t_r(\tau),t_h(\tau))$ to improve cross-modal consistency throughout the generation process.

To quantify the efficiency of a coupling, its total transport cost $\mathcal{E}(\boldsymbol{\gamma})$ is measured by the cumulative path energy required to transport the prior to the data distribution:
\begin{equation}
    \mathcal{E}(\gamma) = 
\mathbb{E}_{\rvx\sim\mathcal{D}}\Bigl[
  \int_{0}^{1}\mathcal{L}_{\mathrm{MSE}}(\rvx,\gamma(t))\,\mathrm{d}t
\Bigr]
    \label{eq:transport_cost}
\end{equation}


\begin{proposition}
    
Minimizing \(E(\gamma)\) over admissible schedules \(\gamma\)
can be interpreted as a \emph{constrained transport problem}
in the temporal domain.
Specifically, each schedule \(\gamma:[0,1]\to[0,1]^2\)
induces a coupling measure
\(\pi_\gamma := \gamma_\#\lambda \in \mathcal{P}([0,1]^2)\),
supported on a monotonic curve, and the transport cost admits
the representation
\[
\mathcal{E}(\gamma)
=
\int_{[0,1]^2} c(t_r,t_h)\, d\pi_\gamma(t_r,t_h),
\]
where \(c(t_r,t_h):=\mathbb{E}_{x\sim \mathcal D}
\big[L_{\mathrm{MSE}}(x,(t_r,t_h))\big]\).
\end{proposition}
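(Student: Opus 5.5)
The plan is a direct measure-theoretic computation: exchange expectation and time integral, then apply the change-of-variables formula for pushforward measures. Let \(\lambda\) denote Lebesgue measure on \([0,1]\), and assume \(\gamma\) is Borel measurable, as every monotone map is. The first step is to check that \(\pi_\gamma := \gamma_\#\lambda\) is a Borel probability measure on \([0,1]^2\). This holds because \(\pi_\gamma([0,1]^2)=\lambda(\gamma^{-1}([0,1]^2))=1\).

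Next comes the support claim. Since \(\pi_\gamma(\gamma([0,1]))=\lambda([0,1])=1\), the measure \(\pi_\gamma\) is concentrated on the image curve \(\gamma([0,1])\), and its support is the closure of this set. Because both components \(t_r,t_h\) are nondecreasing, any two points of the image are ordered coordinatewise. The image is therefore a chain in the product order on \([0,1]^2\), and its closure is still a chain. This is what "supported on a monotonic curve" means.

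The main identity takes two steps. First, assume the loss \(L_{\mathrm{MSE}}\ge 0\) is jointly measurable in \((x,t_r,t_h)\). Tonelli's theorem then lets us swap \(\mathbb{E}_{x\sim\mathcal D}\) and \(\int_0^1\,dt\) in \eqref{eq:transport_cost}, with no integrability assumption needed, since the integrand is nonnegative. This gives \(\mathcal E(\gamma)=\int_0^1 c(\gamma(t))\,dt\) with \(c\) as defined, and \(c\) is Borel measurable by the Tonelli section property. Second, apply the change-of-variables formula for pushforwards, \(\int_0^1 (c\circ\gamma)\,d\lambda=\int_{[0,1]^2} c\,d(\gamma_\#\lambda)\). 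This formula holds for nonnegative measurable \(c\): it is immediate for indicators by the definition of the pushforward, extends to simple functions by linearity, and extends to general nonnegative functions by monotone convergence. Together these yield the stated representation, including when the value is \(+\infty\).

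The remaining work is interpretive. Minimizing \(\mathcal E\) over admissible \(\gamma\) is the same as minimizing the linear functional \(\pi\mapsto\int c\,d\pi\) over the constrained set \(\{\gamma_\#\lambda:\gamma\in\Gamma\}\) of couplings carried by monotone curves. Imposing the boundary conditions \(\gamma(0)=(0,0)\) and \(\gamma(1)=(1,1)\) fixes the endpoint constraints. This is the constrained (Monge-type) transport problem in the temporal domain. One caveat: in the classical sense \(\pi_\gamma\) is a coupling of its marginals \((t_r)_\#\lambda\) and \((t_h)_\#\lambda\), so I would state the constraint set explicitly rather than claim those marginals are fixed. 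The main obstacle I expect is not technical but definitional, namely stating the measurability and nonnegativity assumptions on \(L_{\mathrm{MSE}}\) cleanly so that Tonelli's theorem applies.
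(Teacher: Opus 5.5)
Your proposal is correct and follows the same route as the paper's proof. Like the paper, you interchange $\mathbb{E}_{x\sim\mathcal D}$ with $\int_0^1 d\tau$ (you via Tonelli using nonnegativity, the paper via Fubini) to get $\int_0^1 c(\gamma(\tau))\,d\tau$, then apply the pushforward change-of-variables identity; your caveat that the marginals $(t_r)_\#\lambda,(t_h)_\#\lambda$ vary with $\gamma$ is a fair sharpening of the paper's claim $\pi_\gamma\in\Pi(\mu_r,\mu_h)$. One small correction: for discontinuous $\gamma$ the support of $\pi_\gamma$ is only contained in $\overline{\gamma([0,1])}$, not equal to it, since an isolated jump value attained on a null set carries no mass; the monotone-curve claim still follows because any subset of a chain is a chain.
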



 This extends Optimal Transport from the application of sample coupling to the temporal domain, namely the temporal coordination of heterogeneous modalities as in Table~\ref{tab:coupling_generalization}.

We now establish the theoretical connection between this geometric cost and the training dynamics involving different couplings, where the coupling complexity dominates. 

\begin{proposition}
\label{prop:vlb_transport}
Let $\gamma$ be a temporal coupling schedule. Let $\rvu^\gamma_t$ denote the \emph{stochastic} regression target at time $t$ induced by the coupling (where $\rvu^\gamma_t$ depends on $\rvx_0, \rvx_1$). Let $u^\gamma(\rvx_t,t) = \mathbb{E}[\rvu^\gamma_t \mid \rvx_t]$ denote the \emph{ideal} target vector field (the conditional mean).
Let $v_\theta(\rvx_t,t)$ be a learned vector field under $\mathcal{L}=\int_0^1
\mathbb{E}\Big[
\|v_\theta-u_t^\gamma\|^2
\Big]\,dt$.
Then 
the transport cost decomposes as
\begin{align}
\mathcal{E}(\gamma)
&=\int_0^1
\Bigl(
\underbrace{\|v_\theta(\rvx_t,t)-u^\gamma(\rvx_t,t)\|^2}_{\text{Bias}}
+
\underbrace{\Var(\rvu^\gamma_t\mid \rvx_t)}_{\text{Variance}}
\Bigr)\,dt.
\end{align}\label{eq:decomp}
\end{proposition}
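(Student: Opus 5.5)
The plan is to identify $\mathcal{E}(\gamma)$ with the regression loss $\mathcal{L}$ evaluated at the learned field $v_\theta$, and then apply the standard bias--variance (Pythagorean) decomposition of a squared loss around a conditional mean. The expectations $\mathbb{E}_{\rvx_t}$ implicit in the displayed decomposition are to be understood as averages over the noised marginal $p_t^\gamma$, as in $\mathcal{L}$. Concretely, I read the per-time $\mathcal{L}_{\mathrm{MSE}}(\rvx,\gamma(t))$ in Eq.~\eqref{eq:transport_cost} as the conditional regression error $\mathbb{E}\big[\|v_\theta(\rvx_t,t)-\rvu^\gamma_t\|^2 \mid \rvx\big]$. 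Here $\rvx_t$ is drawn by noising each modality to its local level $\gamma(t)=(t_r(t),t_h(t))$. Averaging over $\rvx\sim\mathcal{D}$ and using the tower property then gives $\mathcal{E}(\gamma)=\int_0^1 \mathbb{E}\|v_\theta-\rvu^\gamma_t\|^2\,dt=\mathcal{L}$. This is also consistent with the previous proposition, which rewrites the same integrand as $c(\gamma(t))$.

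The core step is pointwise in $t$. Condition on $\rvx_t$ and insert the conditional mean:
\[
v_\theta-\rvu^\gamma_t=\bigl(v_\theta-u^\gamma(\rvx_t,t)\bigr)+\bigl(u^\gamma(\rvx_t,t)-\rvu^\gamma_t\bigr).
\]
Expanding the square gives three terms. The cross term has conditional expectation
\[
2\,\bigl\langle v_\theta(\rvx_t,t)-u^\gamma(\rvx_t,t),\ \mathbb{E}[u^\gamma-\rvu^\gamma_t\mid\rvx_t]\bigr\rangle=0 .
\]
It vanishes because $v_\theta(\rvx_t,t)$ and $u^\gamma(\rvx_t,t)$ are $\rvx_t$-measurable, and $u^\gamma=\mathbb{E}[\rvu^\gamma_t\mid\rvx_t]$ by definition. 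The remaining conditional second moment is $\mathbb{E}[\|\rvu^\gamma_t-u^\gamma\|^2\mid\rvx_t]=\Var(\rvu^\gamma_t\mid\rvx_t)$, where the variance is taken as the trace over coordinates. Taking the outer expectation over $\rvx_t\sim p_t^\gamma$ and integrating in $t$ (Fubini, with square-integrability of the targets and of $v_\theta$) yields the Bias $+$ Variance split.

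The main obstacle is not the algebra but making the identification in the first step legitimate. Specifically, one must fix precisely what $\mathcal{L}_{\mathrm{MSE}}(\rvx,\gamma(t))$ means, namely the loss of the learned $v_\theta$ against the stochastic target. One must also check that for a multimodal coupling the target $\rvu^\gamma_t$ is a well-defined, jointly measurable function of $(\rvx_0,\rvx_1)$ and the two local times. I would handle this by defining $\rvx_t$ through independent per-modality noising at $(t_r(t),t_h(t))$ and stacking the modality targets into one vector. The squared norm then splits additively across modalities, so the decomposition also holds modality by modality.

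Finally, I would remark on two consequences. The variance term is independent of $\theta$, so it is the irreducible part that random coupling inflates. The bias term vanishes at the optimum $v_\theta=u^\gamma$, which would justify the paper's bias/variance reading of synchronous versus random couplings.
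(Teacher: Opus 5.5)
Your proposal is correct and follows essentially the same route as the paper. Both arguments use the tower property to condition on $\rvx_t$, apply the conditional bias--variance identity around $u^\gamma=\E[\rvu^\gamma_t\mid\rvx_t]$, integrate over $t$, and identify the result with $\mathcal{E}(\gamma)$ through its definition. You are more explicit than the paper on three points: you check that the cross term vanishes, you spell out that the integrand of $\mathcal{E}(\gamma)$ is the loss of $v_\theta$ against the stochastic target, and you make explicit the outer expectation over $\rvx_t$, which the paper simply takes from its definitions.
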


Intuitively, $\mathcal{E}(\gamma)$ measures the difficulty of the generative task under schedule $\gamma$. Therefore, it is desirable to find the geodesic coupling $\gamma^* = \arg\min_{\gamma \in \Gamma} \mathcal{E}(\gamma)$. 

However, unlike classical optimal transport where the cost function is fixed, the transport cost in our setting depends on the learned vector field itself.
This creates a circular dependency: the optimal coupling depends on the generative dynamics, while the dynamics are trained under a given coupling.
This motivates the bilevel optimization strategy introduced next, which serves as a tractable proxy for searching low-energy geodesic couplings.

\begin{table}[t]
\centering
\caption{\textbf{A Unified Transport Perspective.} We unify spatial sample pairing and temporal schedule alignment under the classic coupling formulation: minimizing the expected cost $\mathbb{E}_{\pi}[c]$ over a valid coupling plan $\pi$. Our method effectively defines the ``cost'' as the generative loss and the ``coupling'' as the time schedule.}
\label{tab:coupling_generalization}
\resizebox{\columnwidth}{!}{%
\begin{tabular}{@{}l|c|c|c@{}}
\toprule
\textbf{Perspective} & \textbf{Transport Cost} $c$ & \textbf{Coupling} $\pi$ & \textbf{Optimal Solution} $\pi^*$ \\ \midrule
\text{Sample Coupling} & $\|x_1 - x_0\|^2$ & $\pi(x_0, x_1)$ & OT Map from Sinkhorn \\
\text{Temporal Coupling} & ${\mathcal{L}_{\rm MSE}}(\rvx, t_h, t_r)$ & ${\pi}_\gamma(t_h,t_r)$ & $\pi_{\gamma^*}$ from \ours \\ \bottomrule
\end{tabular}%
}
\end{table}

\subsection{Bi-level Optimization for Geodesic Coupling}\label{subsec:bilevel}

Solving the TOT problem presents a unique challenge distinct from classical OT where the transport cost (e.g., Euclidean distance) is fixed \citep{villani2008optimal}. In our temporal domain, it becomes the kinetic energy of a neural network $v_\theta$, which is itself learned from the data distribution induced by $\gamma$.

To break this dependency cycle, we propose a bilevel optimization strategy that acts as a tractable proxy for geodesic search.
 Our goal is to find a trajectory $\gamma \in \Gamma$ that is optimized for the flow model to learn, effectively minimizing the transport energy required to map the prior to the data:
\begin{align}
\min_{\gamma\in\Gamma}\;& \mathcal{J}(\gamma) = 
\mathbb{E}_{x\sim\mathcal{D}}\Bigl[
  \int_{0}^{1}\mathcal{L}_{\theta^*}(\rvx,\gamma(t))\,\mathrm{d}t
\Bigr] \label{eq:outer_loop} \\
\text{s.t.}\;&
\theta^* \;=\;\arg\min_{\theta}\mathcal{L}_{\mathrm{MSE}}(\theta,\gamma)\,, \label{eq:inner_loop}
\end{align}
where $\mathcal{L}_{\mathrm{MSE}}$ denotes the standard training objective conditioned on the path $\gamma$. 
In the inner loop (Eq. \ref{eq:inner_loop}), we minimize the bias term by training $\theta$ via maximum likelihood estimation to approximate the vector field induced by a fixed coupling $\gamma$. In the {outer loop} (Eq. \ref{eq:outer_loop}), we minimize the a proxy objective $\mathcal{J}(\gamma)$ with the learned vector field $\theta^*$. Intuitively, this steers the generative process to avoid the high energy barriers on the loss landscape, and seeks to capture the true geodesic of the product manifold.

We claim that when the regression bias is small for a fixed coupling
$\gamma$, the conditional variance term becomes the dominant
contributor to the overall energy $\mathcal{E}(\theta,\gamma)$, and the bilevel programming enables effective optimization:
\begin{proposition}
Let $\mathcal{L}(\theta^*, \gamma)$ be the converged loss for a fixed trajectory $\gamma$. The geodesic coupling $\gamma^*$ satisfies:
\begin{equation}
    \gamma^* = \arg\min_{\gamma \in \Gamma} \mathbb{E}_{t, \rvx} \left[ \text{Var}(u_t^\gamma | \rvx_t) \right].
\end{equation}
\end{proposition}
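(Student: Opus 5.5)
We derive this from the bias--variance decomposition of Proposition~\ref{prop:vlb_transport}, combined with the fact that the inner problem in Eq.~\ref{eq:inner_loop} is an $L^2$ regression.

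\textbf{Step 1: identify the inner minimizer.} Fix $\gamma$ and assume the function class of $v_\theta$ is rich enough to contain the conditional mean. Then the minimizer of
\[
\mathcal{L}(\theta,\gamma)=\int_0^1\E\bigl[\|v_\theta(\rvx_t,t)-\rvu^\gamma_t\|^2\bigr]\,dt
\]
is attained pointwise in $(\rvx_t,t)$ by $v_{\theta^*}(\rvx_t,t)=u^\gamma(\rvx_t,t)=\E[\rvu^\gamma_t\mid\rvx_t]$. This is the standard orthogonality argument. Write
\[
v_\theta-\rvu^\gamma_t=(v_\theta-u^\gamma)+(u^\gamma-\rvu^\gamma_t).
\]
The cross term has zero conditional expectation given $\rvx_t$, because $v_\theta-u^\gamma$ is $\rvx_t$-measurable. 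The same computation gives the identity of Proposition~\ref{prop:vlb_transport}, with the bias and variance terms read inside $\E_{\rvx_t}$.

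\textbf{Step 2: evaluate the converged loss.} Substituting $\theta^*$ makes the bias term vanish identically, so
\[
\mathcal{L}(\theta^*,\gamma)=\mathcal{E}(\theta^*,\gamma)=\int_0^1\E_{\rvx_t}\bigl[\Var(\rvu^\gamma_t\mid\rvx_t)\bigr]\,dt=\E_{t,\rvx}\bigl[\Var(u^\gamma_t\mid\rvx_t)\bigr],
\]
with $t\sim\mathrm{Unif}[0,1]$. The outer objective $\mathcal{J}(\gamma)$ in Eq.~\ref{eq:outer_loop} is exactly this quantity evaluated at $\theta^*(\gamma)$. The geodesic coupling is defined as $\gamma^*=\arg\min_\gamma\mathcal{J}(\gamma)$, equivalently the minimizer of $\mathcal{E}$ at the converged field. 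Taking $\arg\min$ over $\Gamma$ of two functions that agree on all of $\Gamma$ therefore yields the claim. Existence or uniqueness of the minimizer is not asserted: the statement is an equality of $\arg\min$ sets.

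\textbf{Step 3: relax realizability.} Without an exact realizability assumption, I would state the result in the regime the paper describes, where the regression bias is small. If
\[
\sup_{\gamma\in\Gamma}\int_0^1\E\|v_{\theta^*}-u^\gamma\|^2\,dt\le\varepsilon,
\]
then $\mathcal{J}$ and the variance functional differ uniformly by at most $\varepsilon$. Hence any minimizer of one is an $\varepsilon$-approximate minimizer of the other, that is, $2\varepsilon$-optimal for it. The exact statement is the limit $\varepsilon\to0$.

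\textbf{Main obstacle.} The main obstacle is Step 1's implicit assumptions, not the algebra. The first is measurability and integrability: $\E\|\rvu^\gamma_t\|^2<\infty$ uniformly in $t$, so that Fubini lets us swap $\int dt$ with the expectation. The second is the precise sense in which the inner $\arg\min$ is taken. I would state these as standing hypotheses on $\Gamma$, namely monotone Lipschitz schedules with targets in $L^2$. I would then handle the $\varepsilon$-version via the uniform bound above, rather than attempt a genuine hypergradient or continuity analysis of $\gamma\mapsto\theta^*(\gamma)$.
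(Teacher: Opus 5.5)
Your argument is correct and takes essentially the same route as the paper's proof. You apply the bias--variance identity of Proposition~\ref{prop:vlb_transport} to the inner regression, use sufficient capacity to make the bias term vanish at $\theta^*$, and conclude that the converged loss, and hence its $\arg\min$ over $\Gamma$, coincides with the integrated conditional variance. Your Step~3 $\varepsilon$-relaxation goes beyond the paper; there, since the bias term is nonnegative, the gap between the two objectives is one-sided, so their minimizers are $\varepsilon$-optimal for each other, not merely $2\varepsilon$-optimal.
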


The above analysis motivates temporal couplings that reduce
intrinsic uncertainty along the transport path, which
empirically leads to more stable training and improved
sample quality metrics.
Intuitively, our bi-level optimization decouples the \textit{learning of the dynamics} (inner loop) from the \textit{discovery of the dependency structure} between modalities (outer loop). While isotropic coupling imposes an arbitrary diagonal line in the time domain (which may be high-curvature in the data manifold), and random coupling introduces high-variance noise, \ours~identifies the natural geodesic that creates the smoothest transformation between the prior and the data.




\subsection{Optimization via Gaussian Process Surrogates}
\label{subsec:bo}
In this section, we present our method to solve the outer loop optimization for the optimal geodesic coupling $\gamma$. 
Directly solving the bi-level optimization in Eq.~(\ref{eq:outer_loop}) via gradient descent is intractable for two primary reasons: 
(1) calculating hyper-gradients through the entire inner-loop training trajectory is computationally prohibitive; 
(2) the landscape of $\mathcal{J}(\gamma)$ with respect to biomolecular couplings often exhibits ``plateau phases'', i.e., regions where one modality halts while the other evolves, rendering gradient-based schedule learners ineffective.

To address these challenges, we exploit the structure of the problem by modeling the \textit{local cost landscape} over the temporal domain. Unlike black-box methods that search directly in the high-dimensional space of trajectories, we adopt a surrogate-based strategy via Gaussian Process (GP).
By learning a smooth approximation of the pointwise cost surface, we reduce the search for an optimal coupling to a geodesic path-finding problem on a learned manifold.
This design allows us to efficiently explore and compare candidate couplings using Bayesian optimization, while amortizing the cost of inner-loop training across iterations.

We define the transport cost density function $c(\mathbf{t}) := \mathbb{E}_{\rvx}[\mathcal{L}_{\text{MSE}}(\theta^*, \rvx, \mathbf{t})]$ for any time pair $\mathbf{t}=(t_r, t_h) \in [0,1]^2$.
We model this unknown surface using a Gaussian Process (GP) surrogate:
\begin{equation}
    c(\mathbf{t}) \sim \mathcal{GP}\left(\mu(\mathbf{t}), k(\mathbf{t}, \mathbf{t}') + \sigma_n^2 \delta_{\mathbf{t}, \mathbf{t}'}\right).
\end{equation}
To maintain computational efficiency and mitigate the impact of stale estimates from early training stages, we maintain a fixed-size {observation buffer} $\mathcal{B}$ with capacity $N_{\max}=1000$.
The GP is fitted dynamically on $\mathcal{B}$, which stores the most recent tuples of time coordinates and their corresponding losses, ensuring the surrogate landscape reflects the current capability of the generative model.

The complete procedure is summarized in Algorithm~\ref{alg:geocoupling}. 
It should be noted that this algorithm provides a stochastic and approximate mechanism for navigating this objective landscape, and performs a noisy outer-loop search over couplings using empirical loss estimates.
The proposed learned coupling lies between these extremes: it adaptively aligns denoising stages across modalities, reducing conditional variance while maintaining sufficient exploration, thereby leading to lower empirical training variance and a tighter variational bound in practice.


\section{Experiments}
\label{sec:experiments}

We conduct comprehensive evaluations on two primary tasks: structure-based drug design (SBDD) and unconditional protein sequence-structure co-design. 
A detailed description is provided in Appendix~\ref{app:setup}.

\subsection{Structure-based Drug Design}
\label{subsec:sbdd}

\begin{table*}[t]
\caption{Results for structure-based ligand design, comparing our \ours~with synchronous (Sync.) and random coupling methods. 
Top-2 results highlighted in \textbf{bold} and \ul{underlined}, respectively. 
$\dagger$: cited from \citet{qiupiloting}.
$\heartsuit$: calculated from released samples.
}
\centering
\label{tab:main}
\resizebox{\linewidth}{!}{%
\begin{tabular}{c l c cc cc cc c c c c c} 
\toprule
\multirow{2}{*}{Coupling} & \multirow{2}{*}{Methods} & PB-Valid & \multicolumn{2}{c}{Vina Score ($\downarrow$)} & \multicolumn{2}{c}{Vina Min ($\downarrow$)} & \multicolumn{2}{c}{Vina Dock ($\downarrow$)} & scRMSD & QED & SA & Div & Size \\
& & Avg. ($\uparrow$) & Avg. & Med. & Avg. & Med. & Avg. & Med. & \textless 2 \r{A} ($\uparrow$) & Avg. & Avg. & Avg. & Avg. \\ \midrule

&{Reference}$\dagger$ & 95.0\% & -6.36 & -6.46 & -6.71 & -6.49 & -7.45 & -7.26 & 34.0\% & 0.48 & 0.73 & - & 22.8 \\ \midrule

\multirow{7}{*}{{Sync.}} 
& AR$\dagger$             & 59.0\% & -5.75 & -5.64 & -6.18 & -5.88 & -6.75 & -6.62 & 36.5\% & 0.51 & 0.63 & 0.70 & 17.7 \\
& Pocket2Mol$\dagger$     & 72.3\% & -5.14 & -4.70 & -6.42 & -5.82 & -7.15 & -6.79 & 32.0\% & {0.57} & {0.76} & 0.69 & 17.7 \\
& TargetDiff$\dagger$     & 50.5\% & -5.47 & -6.30 & -6.64 & -6.83 & -7.80 & -7.91 & 37.1\% & 0.48 & 0.58 & {0.72} & 24.2 \\
& DiffSBDD$\dagger$       & 37.6\% & -1.44 & -4.91 & -4.52 & -5.84 & -7.14 & -7.30  & 18.7\% & 0.47 & 0.58 & {0.73} & 24.4 \\
& DecompDiff$\dagger$     & 71.7\% & -5.19 & -5.27 & -6.03 & -6.00 & -7.03 & -7.16 & 24.2\% & 0.51 & 0.66 & {0.73} & 21.2 \\
& MolCRAFT$\dagger$       & 84.6\% & -6.55 & -6.95 & -7.21 & -7.14 & -7.67 & -7.82 & \textbf{46.8\%} & 0.50 & 0.67 & {0.72} & 22.7 \\
& DrugFlow$\heartsuit$       & 79.6\% & -5.12 & -5.50 & -6.02 & -6.03 & -6.99 & -7.03 & 23.1\% & 0.52 & 0.72 & 0.71 & 21.4 \\ \midrule

\multirow{2}{*}{{Random}} 
& TargetDiff*$\dagger$ & {58.1\%} & {-6.46} & {-6.53} & {-7.04} & {-7.09} & \ul{-8.04} & \ul{-8.12} & {40.2\%}  & 0.49 & 0.59 & 0.71 & 25.7 \\
& MolPilot$\dagger$       & \textbf{95.9\%} & \ul{-6.88} & \ul{-7.03} & \ul{-7.23} & \ul{-7.27} & {-7.92} & {-7.92} & 41.1\% & {0.56} & 0.74 & 0.69 & 22.6 \\ \midrule

\rowcolor{gray!20}
{Learned}
& \ours          & \ul{94.3\%} & \textbf{-7.16} & \textbf{-7.43} & \textbf{-7.68} & \textbf{-7.62} & \textbf{-8.32} & \textbf{-8.43} & \ul{43.1\%} & 0.54 & {0.77} & 0.70 & 22.6 \\ \bottomrule
\end{tabular}
}
\end{table*}

\paragraph{Dataset.} 
Following standard SBDD protocols~\citep{luo_3d_2022, francoeur2020three}, we evaluate our method on the CrossDock dataset using the same split. The dataset contains 100,000 protein-ligand complexes for training and 100 complexes for test, where each provides a protein binding pocket and a reference ligand. The main experiment is conducted by sampling 100 molecules for each test protein.

\paragraph{Baselines.} 
We compare \ours~with representative methods from three major categories: 
(1) autoregressive models, including AR~\citep{luo_3d_2022} and Pocket2Mol~\citep{peng_pocket2mol_2022}; 
(2) diffusion-based models, including TargetDiff~\citep{guan_3d_2023}, DiffSBDD~\citep{schneuing2024structure}, and DecompDiff~\citep{guan_decompdiff_2023}; 
(3) Bayesian Flow Network (BFN)-based models, including MolCRAFT~\citep{qu2024molcraft} and MolPilot 
\citep{qiupiloting}; and (4) flow matching DrugFlow~\citep{schneuing_multi-domain_2025}.
All baseline results are taken from the released samples under the same evaluation protocol. 

\paragraph{Metrics.} 
We evaluate generated ligands from multiple perspectives. 
\textbf{Binding affinity} is assessed using AutoDock Vina \citep{eberhardt2021autodock}, reporting Vina Score, Vina Min, and Vina Dock (average and median; lower is better). 
\textbf{Pose validity} is measured by Posebusters validity ratio (PB-Valid), the percentage of generated molecules passing geometric and chemical validity checks \citep{buttenschoen2024posebusters}. 
\textbf{3D accuracy} is quantified by the percentage of samples with scRMSD below $2$\AA{} with respect to the reference ligand. 
\textbf{Molecular quality} is evaluated using QED and synthetic accessibility (SA), while \textbf{diversity} (Div) and average molecular size (Size) are reported to assess structural diversity and distributional alignment. 

\begin{figure*}[ht] 
    \centering
    
    \begin{minipage}[c]{0.66\textwidth}
        \centering
        \captionof{table}{Unconditional protein co-design results from 100 to 500 with sample size $N=100$. 
        Top-2 are \textbf{bold} / \ul{underlined}.
        $\diamondsuit$: baseline results calculated by us using the official code.
        }
        \label{tab:benchmark}
        
        \resizebox{\linewidth}{!}{%
            \begin{tabular}{lccccc}
    \toprule
    \multirow{2.5}{*}{\textbf{Method}} & \multicolumn{2}{c}{\textbf{Quality} ($\uparrow$)} & \multicolumn{2}{c}{\textbf{Diversity} ($\uparrow$)} & \multicolumn{1}{c}{\textbf{Novelty} ($\downarrow$)} \\
    \cmidrule(lr){2-3} \cmidrule(lr){4-5} \cmidrule(lr){6-6}
     & Co-design & pLDDT & 1 - Pairwise TM & FS Clusters & Max TM \\
    \midrule
    ProteinGenerator$\diamondsuit$          & 0.11 & 55.25 & 0.55 & 0.31 & 0.88 \\
    ProtPardelle$\diamondsuit$              & 0.38 & 63.46 & 0.42 & 0.08 & \ul{0.82} \\
    ProtPardelle-1c$\diamondsuit$           & 0.44 & 65.92 & 0.41 & 0.09 & \textbf{0.81} \\
    MultiFlow$\diamondsuit$                 & 0.72 & 79.39 & \ul{0.63} & 0.56 & 0.83 \\
    La-Proteina (no-tri)$\diamondsuit$      & 0.67 & 81.89 & \ul{0.63} & 0.64 & \textbf{0.81} \\
    La-Proteina (tri)$\diamondsuit$         & \ul{0.77} & \textbf{85.32} & 0.59 & 0.36 & 0.85 \\  
    DPLM2$\diamondsuit$                     & 0.31 & \ul{83.69} & \ul{0.63} & 0.49 & {0.96} \\  
    ESM3 (seq$\to$str)$\diamondsuit$        & 0.10 & 60.89 & 0.62 & {0.70} & 0.90 \\  
    ESM3 (str$\to$seq)$\diamondsuit$        & 0.05 & 61.10 & \ul{0.63} & \textbf{0.81} & 0.90 \\
    \midrule
    \rowcolor{gray!20}\ours   & \textbf{0.79} & 80.15 & \ul{0.63} & 0.48 & 0.83 \\ 
    \rowcolor{gray!20}\ours~(post-hoc) & 0.74 & 79.23 & \textbf{0.64} & \ul{0.73} & 0.83 \\ 
    \bottomrule
\end{tabular}
        }
    \end{minipage}%
    \hfill 
    \begin{minipage}[c]{0.3\textwidth}
        \centering
        \includegraphics[width=\linewidth]{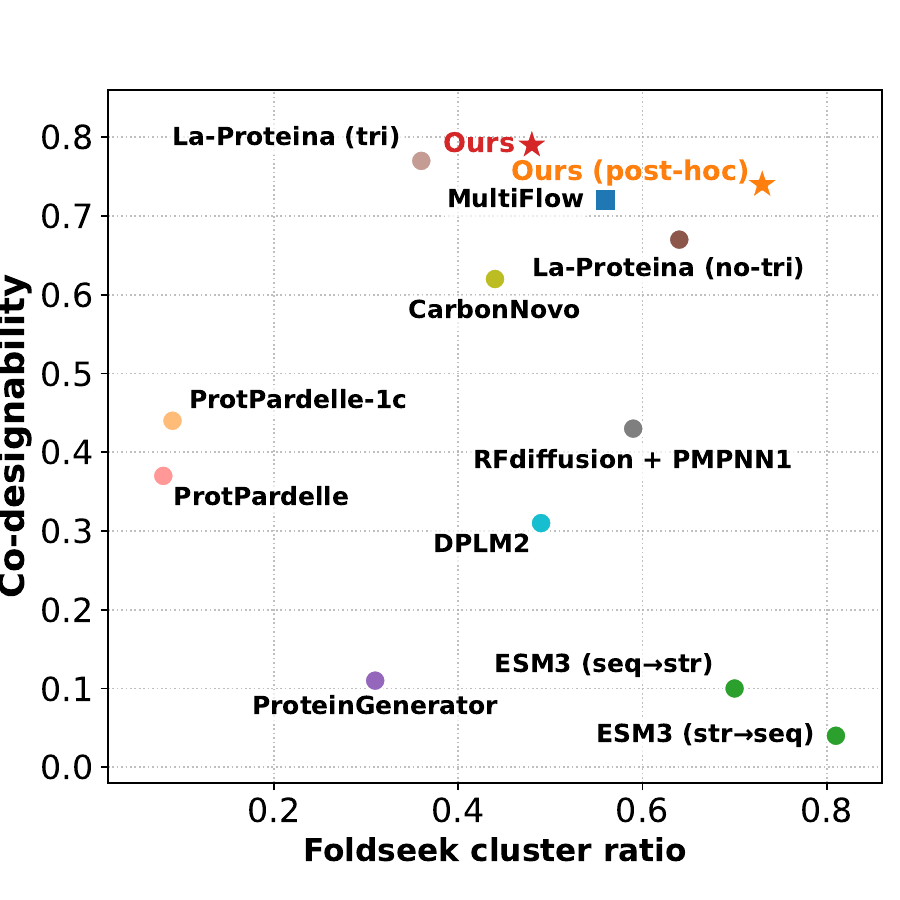}
        \caption{Co-designability vs. diversity trade-off (upper-right corner is better).}
        \label{fig:design_vs_div}
    \end{minipage}
\end{figure*}

\begin{figure*}[ht]
    \centering
    \includegraphics[width=\linewidth]{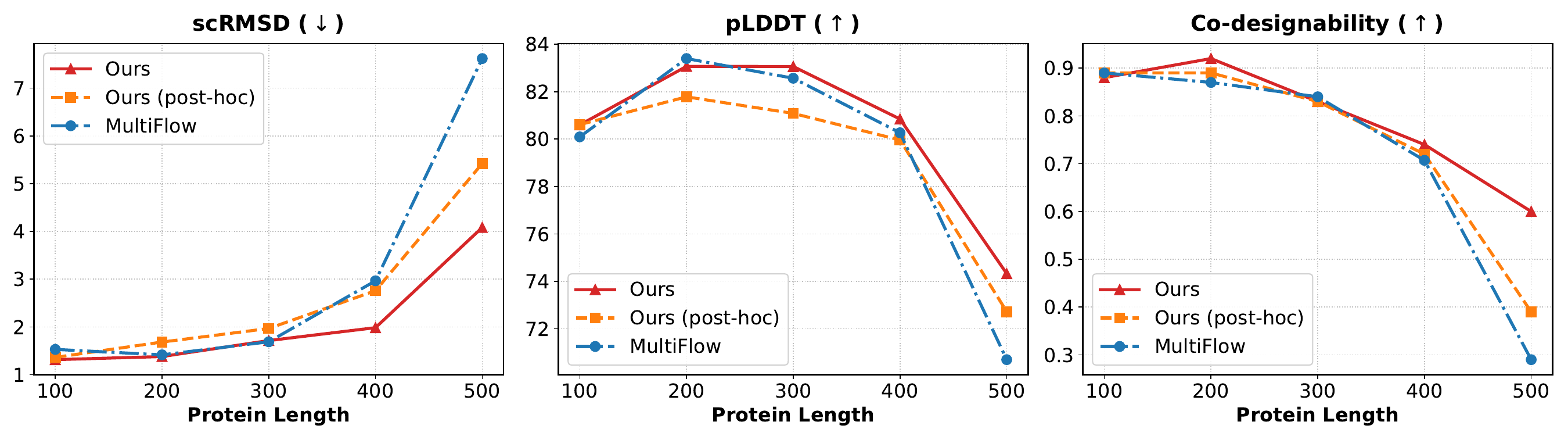}
\caption{{Performance Comparison Across Protein Lengths.} We evaluate scRMSD, pLDDT, and Co-designability across lengths 100-500. \ours~maintains robustness even in OOD regimes (Length $\ge$ 400), whereas baselines degrade significantly.}
    \label{fig:length_metrics}
\end{figure*}

\paragraph{Results.}
We summarize our findings below.

\textit{Learned coupling generally outperforms synchronous and independent baselines.}
Our \ours~that optimizes the coupling during training, consistently achieves state-of-the-art binding affinity across all metrics (Table~\ref{tab:main}), with an average Vina Score of $-7.11$ and Vina Dock of $-8.12$, outperforming both synchronous (Sync.) and random coupling methods.
Notably, \ours~attains systematically stronger protein-ligand interactions without sacrificing validity or drug-like properties, indicating that the learned generation paths better match the underlying data geometry.

\textit{Implicit synchronous coupling yields suboptimal sample quality.}
Early diffusion and BFN-based MolCRAFT, which rely on implicit synchronous coupling, generally achieve moderate PB-Valid rates and fall short in binding affinity.
In contrast, random coupling methods, represented by MolPilot and the TargetDiff* variant, substantially improve binding affinity over synchronous baselines, suggesting that entangling the generation trajectory with model updates restricts optimization efficiency.

\textit{Geodesic coupling leads to more efficient transport.}
MolPilot can be regarded as a special case of our bilevel optimization, where the outer loop is only conducted once after the training is converged. While superior to synchronous coupling methods, it is insufficient to fully align the generation trajectory with the training dynamics of the model, and requires 2$\times$ training steps to converge.
Fig.~\ref{fig:main}C illustrates the performance gap between our method and random coupling with post-hoc coupling selection. 
\ours~demonstrates significantly faster convergence compared to baselines. Notably, it achieves low structure error early in the process, which effectively guides the subsequent rapid recovery of sequence identity. This validates our claim that optimizing the temporal coupling facilitates cross-modal consistency, rather than compensating for suboptimal synchronous or random couplings.

\subsection{Protein Design}

\paragraph{Dataset.}
We follow \citet{pmlr-v235-campbell24a} and adopt the same training data, which consists of 18,694 monomeric protein structures obtained from \citet{yim2023se3} with lengths ranging from 60 to 384 residues, together with 4,179 PMPNN distillation samples.
For evaluation, we sample 100 proteins at each target length of 100, 200, 300, 400, and 500 residues, covering both in-distribution (ID) and length extrapolation (OOD) settings.

\paragraph{Baselines.}
We compare \ours~with representative protein co-design methods, categorized by their generative paradigms:
(1) {Diffusion Models}, including the sequence-space diffusion model ProteinGenerator \citep{lisanza2025multistate} and all-atom structure diffusion models ProtPardelle \citep{chu2024all} and ProtPardelle-1c \citep{lu2025conditional}; 
(2) {Flow Matching}, covering both sample-space methods like MultiFlow \citep{pmlr-v235-campbell24a} and partially latent-space methods such as La-Proteina \citep{geffner2026laproteina}; 
and (3) {Multimodal Token-based Models}, including the discrete diffusion model DPLM-2 \citep{wang2025dplm2} and the masked generative model ESM3 \citep{hayes2025simulating}, both of which jointly reason over sequence and structure tokens.

\paragraph{Metrics.}
We evaluate the generated proteins from complementary perspectives.
\textbf{Co-designability} is measured by the percentage of designed proteins exhibiting scRMSD \textless 2 \r{A} after ESMFold refolding.
\textbf{Diversity} is quantified in terms of structures by TM-score diversity, calculated as 1 - pairwise TM-score, and of sequences by FS cluster diversity, calculated as the number of Foldseek clusters.
\textbf{Novelty} is assessed structurally, calculated as the average TM-score for generated designable proteins with respect to its most similar counterpart in the PDB database using Foldseek.

\paragraph{Results.} We observe similar performance gains on the task of unconditional protein co-design in Table~\ref{tab:benchmark} (see Appendix~\ref{app-sec:eval_result} for more comprehensive results):

\textit{Our method consistently achieves strong designability while preserving competitive diversity and novelty.}
Training the generative model while iteratively performing bilevel optimization (\ours), yields the top co-designability (0.79) with reasonable structural novelty (0.83 average max TM-score).
As shown in Fig.~\ref{fig:design_vs_div}, our method occupies the optimal upper-right region of the trade-off landscape in terms of co-designability and Foldseek diversity.

\textit{Optimized coupling remains robust across both in-distribution and out-of-distribution protein lengths.}
Notably, Fig.~\ref{fig:length_metrics} shows that the advantage brought by \ours~becomes more pronounced in the OOD regime (length $\geq 400$).
While MultiFlow's performance drops significantly at length 500 (Co-designability $< 0.3$), \ours~maintains a high designable rate ($> 0.6$), showing that an optimized transport plan contributes to more robust performance. Notably, the longer-length proves to be more challenging for all models, with details in Appendix~\ref{app-sec:eval_result}.

\textit{The learned post-hoc coupling can be a plug-and-play enhancer for models trained by random coupling.}
After obtaining the optimized coupling, we apply it to the original MultiFlow checkpoint. As seen in Table~\ref{tab:benchmark} and Fig.~\ref{fig:design_vs_div}, the \ours~(post-hoc) variant significantly improves sequence diversity (FS Clusters 0.73) while maintaining competitive designability and even surpassing the previous state-of-the-art La-Proteina which is trained on the AFDB database \citep{FLEMING2025168967}.

\textit{Potentially universal structure-leading dynamics.}
We visualize the learned couplings in Fig.~\ref{fig:scheduler_comparison}, which reveals a shared property across both micro-scale (small molecules) and macro-scale (proteins) design tasks, i.e., \textbf{structure-leading geodesic}. The geometric structure evolves quicker in the early stages to establish a scaffold, while the sequence accelerates after the geometry is sufficiently formed.
This universal ``geometry-before-semantics'' discovery suggests that the dependencies between structure and sequence are governed by a common physical principle—geometry provides the necessary conditioning context for sequence decoding—which our bilevel optimization framework autonomously identifies and exploits.

\subsection{Ablation Study}

\begin{table}[t]
\centering
\caption{Ablation study on the contribution of bilevel optimization and the smoothing mechanism (EMA) with sample size 1000. ``Fixed Schedule'' refers to using a static scheduler derived a-priori, and ``w/o EMA'' denotes removing the trajectory smoothing.}
\label{tab:ablation_study}
\resizebox{\columnwidth}{!}{%
\begin{tabular}{@{}l|ccc|cc|cc@{}}
\toprule
\multirow{2}{*}{{Method}} & {Connected} & {QED} & {SA} & \multicolumn{2}{c|}{{Vina Score ($\downarrow$)}} & \multicolumn{2}{c}{{Vina Min ($\downarrow$)}} \\
 & ($\%, \uparrow$) & ($\uparrow$) & ($\uparrow$) & Mean & Median & Mean & Median  \\ \midrule
w/o EMA & 91.9 & \textbf{0.58} & 0.74 & -6.50 & -7.03 & -7.24 & -7.17 \\
Fixed $\gamma^*$ & 91.1& 0.54 & \textbf{0.76} & -6.97 & -7.22 & -7.45 & \textbf{-7.44} \\ \midrule
Ours & \textbf{93.5}& 0.54 & 0.75 & \textbf{-7.12} & \textbf{-7.24} & \textbf{-7.57} & -7.41 \\ \bottomrule
\end{tabular}%
}
\end{table}

To validate the effectiveness of our proposed bilevel optimization components, we conducted ablation studies focusing on the scheduler learning mechanism.

\paragraph{Effect of Bilevel Optimization.} 
On the task of structure-based drug design,
we compared \ours~with a fixed schedule variant, where the optimal coupling $\gamma^*$
  is determined a-priori and kept frozen throughout training. As shown in Table~\ref{tab:ablation_study}, the fixed schedule yields inferior performance in terms of binding affinity (Vina Score Mean -6.97 vs. -7.12) and connected rate (91.1\% vs. 93.5\%). This suggests that a static schedule fails to adapt to the evolving capability of the flow model, whereas the adaptive optimization without Exponential Moving Average (EMA) on the learned scheduler risks abrupt change of schedules and suffers in capturing the molecular geometry accurately. Our approach continuously and smoothly aligns the generative difficulty with the model's learning progress, capturing the intricate geometric dependencies and effectively balances chemical validity and binding affinity.
On the task of protein design, it can be seen that the structural plausibility is better captured by bilevel optimization, where the post-hoc variant falls short in codesignability, novelty as well as the ability to extrapolate beyond training distributions.

\begin{figure}[ht]
    \centering
    \includegraphics[width=\columnwidth]{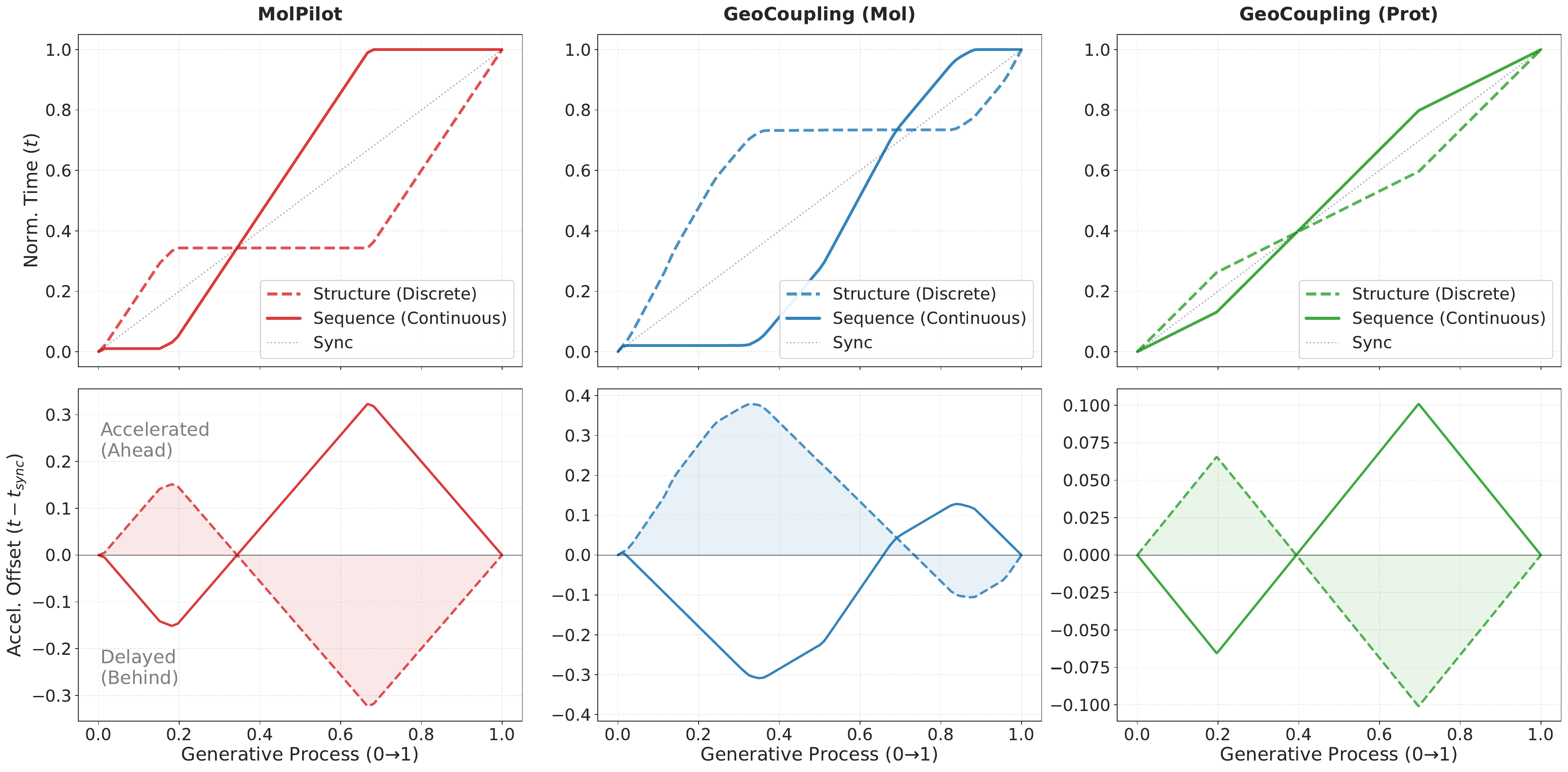}
    \caption{Comparison of the coupling learned by \ours~against baselines on different modalities, showing a \textit{structure-leading} geodesic for biomolecules, prioritizing geometric context before semantic sequence decoding.}
    \label{fig:scheduler_comparison}
\end{figure}

\paragraph{Computational Efficiency of Bayesian Optimization.} 
We measure the wall-clock time for optimization methods with and without Bayesian Optimization via Gaussian Process, i.e.,
a brute-force approach using discrete global search by constructing a dense transition grid, which requires over 1200 seconds per update cycle. The bottleneck lies in evaluating the transport cost across the dense grid (1213.6s), and the number of evaluation (NFEs) has a complexity of $O(N^K)$, with $K$ being the number of modalities and $N$ the grid resolution.
This means that directly solving for the optimal geodesic $\gamma^*$ on the discretized product manifold is computationally demanding. 
In contrast, \ours~leverages Bayesian Optimization to model the objective landscape with a Gaussian Process surrogate. This allows for efficient queries without exhaustive evaluation, reducing the optimization time to just 21.5 seconds—a \textbf{56$\times$ speedup}. This efficiency is critical, as it allows the outer-loop optimization to run frequently alongside the inner-loop model training without stalling the pipeline.

\section{Conclusion}
In this work, we identified the \textit{dynamic mismatch} between heterogeneous modalities as a critical bottleneck in multimodal biomolecular co-design. Existing methods relying on synchronous or random couplings often struggle with gradient conflicts and inefficient transport. To address this, we proposed {\ours}, a novel framework that formulates the generative process as a Temporal Optimal Transport (TOT) problem. By leveraging a bilevel Bayesian optimization strategy, \ours~learns the intrinsic geodesic coupling that dynamically aligns the generative difficulty of sequence and structure. We believe \ours~offers a generalizable paradigm for coordinating multimodal generative models in scientific domains.


\section*{Acknowledgements}
This work is supported by the Natural Science Foundation of China (Grant No. 62376133) and sponsored by Beijing Nova Program (20240484682).

\section*{Impact Statement}

This paper presents work whose goal is to advance generative modeling techniques within the domain of AI for Science and biomolecular discovery. Our proposed framework, \ours, aims to bridge the gap between geometric structure generation and sequence design, thereby improving the efficiency and quality of Structure-Based Drug Design (SBDD) and de novo protein engineering.

The primary societal benefit of this research lies in its potential to accelerate the development of novel therapeutics and functional biomolecules. By enabling the generation of more biophysically consistent and plausible molecules, our method contributes to reducing the high attrition rates and costs associated with early-stage drug discovery. This could facilitate faster delivery of treatments for complex diseases and support advancements in biotechnology, such as enzyme design for sustainability.

We acknowledge that advanced generative models in chemistry and biology carry inherent dual-use risks. The same capabilities that allow for the efficient exploration of therapeutic chemical space could theoretically be misused to design harmful compounds or toxins. While our work focuses on the fundamental optimization of multimodal generative processes rather than specific functional targeting, we emphasize the importance of responsible deployment. Future applications of such models should be accompanied by rigorous safety screening protocols, wet-lab validation, and adherence to ethical guidelines in biosecurity.



\bibliography{example_paper}

\begin{thebibliography}{46}
\providecommand{\natexlab}[1]{#1}
\providecommand{\url}[1]{\texttt{#1}}
\expandafter\ifx\csname urlstyle\endcsname\relax
  \providecommand{\doi}[1]{doi: #1}\else
  \providecommand{\doi}{doi: \begingroup \urlstyle{rm}\Url}\fi

\bibitem[Abramson et~al.(2024)Abramson, Adler, Dunger, Evans, Green, Pritzel, Ronneberger, Willmore, Ballard, Bambrick, et~al.]{abramson2024accurate}
Abramson, J., Adler, J., Dunger, J., Evans, R., Green, T., Pritzel, A., Ronneberger, O., Willmore, L., Ballard, A.~J., Bambrick, J., et~al.
\newblock Accurate structure prediction of biomolecular interactions with {AlphaFold} 3.
\newblock \emph{Nature}, 630\penalty0 (8016):\penalty0 493--500, 2024.

\bibitem[Banerjee et~al.(2005)Banerjee, Merugu, Dhillon, and Ghosh]{JMLR:v6:banerjee05b}
Banerjee, A., Merugu, S., Dhillon, I.~S., and Ghosh, J.
\newblock Clustering with {Bregman} divergences.
\newblock \emph{Journal of Machine Learning Research}, 6\penalty0 (58):\penalty0 1705--1749, 2005.
\newblock URL \url{http://jmlr.org/papers/v6/banerjee05b.html}.

\bibitem[Bose et~al.(2023)Bose, Akhound-Sadegh, Huguet, FATRAS, Rector-Brooks, Liu, Nica, Korablyov, Bronstein, and Tong]{bosefoldflow}
Bose, J., Akhound-Sadegh, T., Huguet, G., FATRAS, K., Rector-Brooks, J., Liu, C.-H., Nica, A.~C., Korablyov, M., Bronstein, M.~M., and Tong, A.
\newblock Se (3)-stochastic flow matching for protein backbone generation.
\newblock In \emph{The Twelfth International Conference on Learning Representations}, 2023.

\bibitem[Branden \& Tooze(2012)Branden and Tooze]{branden2012introduction}
Branden, C.~I. and Tooze, J.
\newblock \emph{Introduction to protein structure}.
\newblock Garland Science, 2012.

\bibitem[Buttenschoen et~al.(2024)Buttenschoen, Morris, and Deane]{buttenschoen2024posebusters}
Buttenschoen, M., Morris, G.~M., and Deane, C.~M.
\newblock {PoseBusters}: {AI}-based docking methods fail to generate physically valid poses or generalise to novel sequences.
\newblock \emph{Chemical Science}, 15\penalty0 (9):\penalty0 3130--3139, 2024.

\bibitem[Campbell et~al.(2024)Campbell, Yim, Barzilay, Rainforth, and Jaakkola]{pmlr-v235-campbell24a}
Campbell, A., Yim, J., Barzilay, R., Rainforth, T., and Jaakkola, T.
\newblock Generative flows on discrete state-spaces: Enabling multimodal flows with applications to protein co-design.
\newblock In \emph{Proceedings of the 41st International Conference on Machine Learning}, pp.\  5453--5512. PMLR, 2024.
\newblock URL \url{https://proceedings.mlr.press/v235/campbell24a.html}.

\bibitem[Chou(1995)]{chou1995novel}
Chou, K.-C.
\newblock A novel approach to predicting protein structural classes in a (20--1)-d amino acid composition space.
\newblock \emph{Proteins: Structure, Function, and Bioinformatics}, 21\penalty0 (4):\penalty0 319--344, 1995.

\bibitem[Chu et~al.(2024)Chu, Kim, Cheng, El~Nesr, Xu, Shuai, and Huang]{chu2024all}
Chu, A.~E., Kim, J., Cheng, L., El~Nesr, G., Xu, M., Shuai, R.~W., and Huang, P.-S.
\newblock An all-atom protein generative model.
\newblock \emph{Proceedings of the National Academy of Sciences}, 121\penalty0 (27):\penalty0 e2311500121, 2024.

\bibitem[Dauparas et~al.(2022)Dauparas, Anishchenko, Bennett, Bai, Ragotte, Milles, Wicky, Courbet, de~Haas, Bethel, et~al.]{dauparas2022robust}
Dauparas, J., Anishchenko, I., Bennett, N., Bai, H., Ragotte, R.~J., Milles, L.~F., Wicky, B.~I., Courbet, A., de~Haas, R.~J., Bethel, N., et~al.
\newblock Robust deep learning--based protein sequence design using proteinmpnn.
\newblock \emph{Science}, 378\penalty0 (6615):\penalty0 49--56, 2022.

\bibitem[Eberhardt et~al.(2021)Eberhardt, Santos-Martins, Tillack, and Forli]{eberhardt2021autodock}
Eberhardt, J., Santos-Martins, D., Tillack, A.~F., and Forli, S.
\newblock Autodock vina 1.2. 0: New docking methods, expanded force field, and python bindings.
\newblock \emph{Journal of chemical information and modeling}, 61\penalty0 (8):\penalty0 3891--3898, 2021.

\bibitem[Fleming et~al.(2025)Fleming, Magana, Nair, Tsenkov, Bertoni, Pidruchna, {Lima Afonso}, Midlik, Paramval, Žídek, Laydon, Kovalevskiy, Pan, Cheng, Žiga Avsec, Bycroft, Wong, Last, Mirdita, Steinegger, Kohli, Váradi, and Velankar]{FLEMING2025168967}
Fleming, J., Magana, P., Nair, S., Tsenkov, M., Bertoni, D., Pidruchna, I., {Lima Afonso}, M.~Q., Midlik, A., Paramval, U., Žídek, A., Laydon, A., Kovalevskiy, O., Pan, J., Cheng, J., Žiga Avsec, Bycroft, C., Wong, L.~H., Last, M., Mirdita, M., Steinegger, M., Kohli, P., Váradi, M., and Velankar, S.
\newblock {AlphaFold} protein structure database and 3d-beacons: New data and capabilities.
\newblock \emph{Journal of Molecular Biology}, 437\penalty0 (15):\penalty0 168967, 2025.
\newblock ISSN 0022-2836.
\newblock \doi{https://doi.org/10.1016/j.jmb.2025.168967}.
\newblock URL \url{https://www.sciencedirect.com/science/article/pii/S0022283625000336}.
\newblock Computation Resources for Molecular Biology.

\bibitem[Francoeur et~al.(2020)Francoeur, Masuda, Sunseri, Jia, Iovanisci, Snyder, and Koes]{francoeur2020three}
Francoeur, P.~G., Masuda, T., Sunseri, J., Jia, A., Iovanisci, R.~B., Snyder, I., and Koes, D.~R.
\newblock Three-dimensional convolutional neural networks and a cross-docked data set for structure-based drug design.
\newblock \emph{Journal of chemical information and modeling}, 60\penalty0 (9):\penalty0 4200--4215, 2020.

\bibitem[Geffner et~al.(2026)Geffner, Didi, Cao, Reidenbach, Zhang, Dallago, Kucukbenli, Kreis, and Vahdat]{geffner2026laproteina}
Geffner, T., Didi, K., Cao, Z., Reidenbach, D., Zhang, Z., Dallago, C., Kucukbenli, E., Kreis, K., and Vahdat, A.
\newblock La-proteina: Atomistic protein generation via partially latent flow matching.
\newblock In \emph{The Fourteenth International Conference on Learning Representations}, 2026.
\newblock URL \url{https://openreview.net/forum?id=RDerF20JYT}.

\bibitem[Graves et~al.(2023)Graves, Srivastava, Atkinson, and Gomez]{graves2023bayesian}
Graves, A., Srivastava, R.~K., Atkinson, T., and Gomez, F.
\newblock Bayesian flow networks.
\newblock \emph{arXiv preprint arXiv:2308.07037}, 2023.

\bibitem[Guan et~al.(2022)Guan, Qian, Peng, Su, Peng, and Ma]{guan_3d_2023}
Guan, J., Qian, W.~W., Peng, X., Su, Y., Peng, J., and Ma, J.
\newblock 3d equivariant diffusion for target-aware molecule generation and affinity prediction.
\newblock In \emph{The Eleventh International Conference on Learning Representations}, 2022.

\bibitem[Guan et~al.(2023)Guan, Zhou, Yang, Bao, Peng, Ma, Liu, Wang, and Gu]{guan_decompdiff_2023}
Guan, J., Zhou, X., Yang, Y., Bao, Y., Peng, J., Ma, J., Liu, Q., Wang, L., and Gu, Q.
\newblock {D}ecomp{D}iff: Diffusion models with decomposed priors for structure-based drug design.
\newblock In Krause, A., Brunskill, E., Cho, K., Engelhardt, B., Sabato, S., and Scarlett, J. (eds.), \emph{Proceedings of the 40th International Conference on Machine Learning}, volume 202 of \emph{Proceedings of Machine Learning Research}, pp.\  11827--11846. PMLR, 23--29 Jul 2023.
\newblock URL \url{https://proceedings.mlr.press/v202/guan23a.html}.

\bibitem[Hayes et~al.(2025)Hayes, Rao, Akin, Sofroniew, Oktay, Lin, Verkuil, Tran, Deaton, Wiggert, et~al.]{hayes2025simulating}
Hayes, T., Rao, R., Akin, H., Sofroniew, N.~J., Oktay, D., Lin, Z., Verkuil, R., Tran, V.~Q., Deaton, J., Wiggert, M., et~al.
\newblock Simulating 500 million years of evolution with a language model.
\newblock \emph{Science}, 387\penalty0 (6736):\penalty0 850--858, 2025.

\bibitem[Hsu et~al.(2022)Hsu, Verkuil, Liu, Lin, Hie, Sercu, Lerer, and Rives]{hsu_learning_2022}
Hsu, C., Verkuil, R., Liu, J., Lin, Z., Hie, B., Sercu, T., Lerer, A., and Rives, A.
\newblock Learning inverse folding from millions of predicted structures, April 2022.
\newblock URL \url{http://biorxiv.org/lookup/doi/10.1101/2022.04.10.487779}.

\bibitem[Huang et~al.(2016)Huang, Boyken, and Baker]{huang2016coming}
Huang, P.-S., Boyken, S.~E., and Baker, D.
\newblock The coming of age of de novo protein design.
\newblock \emph{Nature}, 537\penalty0 (7620):\penalty0 320--327, 2016.

\bibitem[Jumper et~al.(2021)Jumper, Evans, Pritzel, Green, Figurnov, Ronneberger, Tunyasuvunakool, Bates, {\v{Z}}{\'\i}dek, Potapenko, et~al.]{jumper2021highly}
Jumper, J., Evans, R., Pritzel, A., Green, T., Figurnov, M., Ronneberger, O., Tunyasuvunakool, K., Bates, R., {\v{Z}}{\'\i}dek, A., Potapenko, A., et~al.
\newblock Highly accurate protein structure prediction with {AlphaFold}.
\newblock \emph{Nature}, 596\penalty0 (7873):\penalty0 583--589, 2021.

\bibitem[Kingma \& Gao(2024)Kingma and Gao]{kingma2024understanding}
Kingma, D. and Gao, R.
\newblock Understanding diffusion objectives as the elbo with simple data augmentation.
\newblock \emph{Advances in Neural Information Processing Systems}, 36, 2024.

\bibitem[Lin et~al.(2023)Lin, Akin, Rao, Hie, Zhu, Lu, Smetanin, Verkuil, Kabeli, Shmueli, et~al.]{lin2023evolutionary}
Lin, Z., Akin, H., Rao, R., Hie, B., Zhu, Z., Lu, W., Smetanin, N., Verkuil, R., Kabeli, O., Shmueli, Y., et~al.
\newblock Evolutionary-scale prediction of atomic-level protein structure with a language model.
\newblock \emph{Science}, 379\penalty0 (6637):\penalty0 1123--1130, 2023.

\bibitem[Lipman et~al.(2022)Lipman, Chen, Ben-Hamu, Nickel, and Le]{lipman2022flow}
Lipman, Y., Chen, R.~T., Ben-Hamu, H., Nickel, M., and Le, M.
\newblock Flow matching for generative modeling.
\newblock In \emph{The Eleventh International Conference on Learning Representations}, 2022.

\bibitem[Lisanza et~al.(2025)Lisanza, Gershon, Tipps, Sims, Arnoldt, Hendel, Simma, Liu, Yase, Wu, et~al.]{lisanza2025multistate}
Lisanza, S.~L., Gershon, J.~M., Tipps, S.~W., Sims, J.~N., Arnoldt, L., Hendel, S.~J., Simma, M.~K., Liu, G., Yase, M., Wu, H., et~al.
\newblock Multistate and functional protein design using {RoseTTAFold} sequence space diffusion.
\newblock \emph{Nature biotechnology}, 43\penalty0 (8):\penalty0 1288--1298, 2025.

\bibitem[Liu et~al.(2022)Liu, Gong, and Liu]{liu2022flow}
Liu, X., Gong, C., and Liu, Q.
\newblock Flow straight and fast: Learning to generate and transfer data with rectified flow.
\newblock \emph{arXiv preprint arXiv:2209.03003}, 2022.

\bibitem[Lu et~al.(2026)Lu, Wang, Zhang, Gu, Jaitly, Susskind, and Bautista]{lu2026simpledesign}
Lu, J., Wang, Y., Zhang, Y., Gu, J., Jaitly, N., Susskind, J.~M., and Bautista, M.~{\'A}.
\newblock Simpledesign - a joint model for protein sequence and structure codesign, 2026.
\newblock URL \url{https://openreview.net/forum?id=ibbog3himK}.

\bibitem[Lu et~al.(2025)Lu, Shuai, Kouba, Li, Chen, Shirali, Kim, and Huang]{lu2025conditional}
Lu, T., Shuai, R., Kouba, P., Li, Z., Chen, Y., Shirali, A., Kim, J., and Huang, P.-S.
\newblock Conditional protein structure generation with protpardelle-1c.
\newblock \emph{bioRxiv}, 2025.

\bibitem[Luo et~al.(2021)Luo, Guan, Ma, and Peng]{luo_3d_2022}
Luo, S., Guan, J., Ma, J., and Peng, J.
\newblock A {3D} {Generative} {Model} for {Structure}-{Based} {Drug} {Design}.
\newblock \emph{Advances in Neural Information Processing Systems}, 34:\penalty0 6229--6239, 2021.
\newblock URL \url{http://arxiv.org/abs/2203.10446}.

\bibitem[Madani et~al.(2020{\natexlab{a}})Madani, McCann, Naik, Keskar, Anand, Eguchi, Huang, and Socher]{madani2020progen}
Madani, A., McCann, B., Naik, N., Keskar, N.~S., Anand, N., Eguchi, R.~R., Huang, P.-S., and Socher, R.
\newblock {ProGen}: Language modeling for protein generation.
\newblock \emph{arXiv preprint arXiv:2004.03497}, 2020{\natexlab{a}}.

\bibitem[Madani et~al.(2020{\natexlab{b}})Madani, McCann, Naik, Keskar, Anand, Eguchi, Huang, and Socher]{madani_progen_2020}
Madani, A., McCann, B., Naik, N., Keskar, N.~S., Anand, N., Eguchi, R.~R., Huang, P.-S., and Socher, R.
\newblock {ProGen}: {Language} {Modeling} for {Protein} {Generation}, March 2020{\natexlab{b}}.
\newblock URL \url{http://arxiv.org/abs/2004.03497}.
\newblock arXiv:2004.03497 [q-bio].

\bibitem[Nijkamp et~al.(2023)Nijkamp, Ruffolo, Weinstein, Naik, and Madani]{nijkamp2023progen2}
Nijkamp, E., Ruffolo, J.~A., Weinstein, E.~N., Naik, N., and Madani, A.
\newblock {ProGen}2: exploring the boundaries of protein language models.
\newblock \emph{Cell systems}, 14\penalty0 (11):\penalty0 968--978, 2023.

\bibitem[Peng et~al.(2022)Peng, Luo, Guan, Xie, Peng, and Ma]{peng_pocket2mol_2022}
Peng, X., Luo, S., Guan, J., Xie, Q., Peng, J., and Ma, J.
\newblock {P}ocket2{M}ol: Efficient molecular sampling based on 3{D} protein pockets.
\newblock In Chaudhuri, K., Jegelka, S., Song, L., Szepesvari, C., Niu, G., and Sabato, S. (eds.), \emph{Proceedings of the 39th International Conference on Machine Learning}, volume 162 of \emph{Proceedings of Machine Learning Research}, pp.\  17644--17655. PMLR, 17--23 Jul 2022.
\newblock URL \url{https://proceedings.mlr.press/v162/peng22b.html}.

\bibitem[Qiu et~al.(2025)Qiu, Song, Fan, Liu, Zhang, Zheng, Zhou, and Ma]{qiupiloting}
Qiu, K., Song, Y., Fan, Z., Liu, P., Zhang, Z., Zheng, M., Zhou, H., and Ma, W.-Y.
\newblock Piloting structure-based drug design via modality-specific optimal schedule.
\newblock In \emph{Forty-second International Conference on Machine Learning}, 2025.

\bibitem[Qu et~al.(2024)Qu, Qiu, Song, Gong, Han, Zheng, Zhou, and Ma]{qu2024molcraft}
Qu, Y., Qiu, K., Song, Y., Gong, J., Han, J., Zheng, M., Zhou, H., and Ma, W.-Y.
\newblock Mol{CRAFT}: Structure-based drug design in continuous parameter space.
\newblock In \emph{Forty-first International Conference on Machine Learning}, 2024.
\newblock URL \url{https://openreview.net/forum?id=KaAQu5rNU1}.

\bibitem[Ren et~al.(2024)Ren, Zhu, and Zhang]{pmlr-v235-ren24e}
Ren, M., Zhu, T., and Zhang, H.
\newblock {C}arbon{N}ovo: Joint design of protein structure and sequence using a unified energy-based model.
\newblock In Salakhutdinov, R., Kolter, Z., Heller, K., Weller, A., Oliver, N., Scarlett, J., and Berkenkamp, F. (eds.), \emph{Proceedings of the 41st International Conference on Machine Learning}, volume 235 of \emph{Proceedings of Machine Learning Research}, pp.\  42462--42483. PMLR, 21--27 Jul 2024.
\newblock URL \url{https://proceedings.mlr.press/v235/ren24e.html}.

\bibitem[Schneuing et~al.(2024)Schneuing, Harris, Du, Didi, Jamasb, Igashov, Du, Gomes, Blundell, Lio, et~al.]{schneuing2024structure}
Schneuing, A., Harris, C., Du, Y., Didi, K., Jamasb, A., Igashov, I., Du, W., Gomes, C., Blundell, T.~L., Lio, P., et~al.
\newblock {Structure-based {Drug} {Design} with {Equivariant} {Diffusion} {Models}}.
\newblock \emph{Nature Computational Science}, 4\penalty0 (12):\penalty0 899--909, 2024.

\bibitem[Schneuing et~al.(2025{\natexlab{a}})Schneuing, Igashov, Dobbelstein, Castiglione, Bronstein, and Correia]{schneuing_multi-domain_2025}
Schneuing, A., Igashov, I., Dobbelstein, A.~W., Castiglione, T., Bronstein, M., and Correia, B.
\newblock Multi-domain {Distribution} {Learning} for {De} {Novo} {Drug} {Design}, 2025{\natexlab{a}}.
\newblock URL \url{https://arxiv.org/abs/2508.17815}.
\newblock Version Number: 1.

\bibitem[Schneuing et~al.(2025{\natexlab{b}})Schneuing, Igashov, Dobbelstein, Castiglione, Bronstein, and Correia]{schneuing2025multidomain}
Schneuing, A., Igashov, I., Dobbelstein, A.~W., Castiglione, T., Bronstein, M.~M., and Correia, B.
\newblock Multi-domain distribution learning for de novo drug design.
\newblock In \emph{The Thirteenth International Conference on Learning Representations}, 2025{\natexlab{b}}.
\newblock URL \url{https://openreview.net/forum?id=g3VCIM94ke}.

\bibitem[Song et~al.(2024)Song, Gong, Xu, Cao, Lan, Ermon, Zhou, and Ma]{song2024equivariant}
Song, Y., Gong, J., Xu, M., Cao, Z., Lan, Y., Ermon, S., Zhou, H., and Ma, W.-Y.
\newblock Equivariant flow matching with hybrid probability transport for 3d molecule generation.
\newblock \emph{Advances in Neural Information Processing Systems}, 36, 2024.

\bibitem[Van~Kempen et~al.(2024)Van~Kempen, Kim, Tumescheit, Mirdita, Lee, Gilchrist, S{\"o}ding, and Steinegger]{van2024fast}
Van~Kempen, M., Kim, S.~S., Tumescheit, C., Mirdita, M., Lee, J., Gilchrist, C.~L., S{\"o}ding, J., and Steinegger, M.
\newblock Fast and accurate protein structure search with foldseek.
\newblock \emph{Nature biotechnology}, 42\penalty0 (2):\penalty0 243--246, 2024.

\bibitem[Villani et~al.(2008)]{villani2008optimal}
Villani, C. et~al.
\newblock \emph{Optimal transport: old and new}, volume 338.
\newblock Springer, 2008.

\bibitem[Wang et~al.(2024)Wang, Zheng, YE, Xue, Huang, and Gu]{wang2024diffusion}
Wang, X., Zheng, Z., YE, F., Xue, D., Huang, S., and Gu, Q.
\newblock Diffusion language models are versatile protein learners.
\newblock In \emph{Forty-first International Conference on Machine Learning}, 2024.
\newblock URL \url{https://openreview.net/forum?id=NUAbSFqyqb}.

\bibitem[Wang et~al.(2025)Wang, Zheng, YE, Xue, Huang, and Gu]{wang2025dplm2}
Wang, X., Zheng, Z., YE, F., Xue, D., Huang, S., and Gu, Q.
\newblock {DPLM}-2: A multimodal diffusion protein language model.
\newblock In \emph{The Thirteenth International Conference on Learning Representations}, 2025.
\newblock URL \url{https://openreview.net/forum?id=5z9GjHgerY}.

\bibitem[Watson et~al.(2023)Watson, Juergens, Bennett, Trippe, Yim, Eisenach, Ahern, Borst, Ragotte, Milles, et~al.]{watson2023novo}
Watson, J.~L., Juergens, D., Bennett, N.~R., Trippe, B.~L., Yim, J., Eisenach, H.~E., Ahern, W., Borst, A.~J., Ragotte, R.~J., Milles, L.~F., et~al.
\newblock De novo design of protein structure and function with rfdiffusion.
\newblock \emph{Nature}, 620\penalty0 (7976):\penalty0 1089--1100, 2023.

\bibitem[YE et~al.(2025)YE, Zheng, Xue, Shen, Wang, Ma, Wang, Wang, Zhou, and Gu]{ye2025proteinbench}
YE, F., Zheng, Z., Xue, D., Shen, Y., Wang, L., Ma, Y., Wang, Y., Wang, X., Zhou, X., and Gu, Q.
\newblock Proteinbench: A holistic evaluation of protein foundation models.
\newblock In \emph{The Thirteenth International Conference on Learning Representations}, 2025.
\newblock URL \url{https://openreview.net/forum?id=BksqWM8737}.

\bibitem[Yim et~al.(2023)Yim, Trippe, De~Bortoli, Mathieu, Doucet, Barzilay, and Jaakkola]{yim2023se3}
Yim, J., Trippe, B.~L., De~Bortoli, V., Mathieu, E., Doucet, A., Barzilay, R., and Jaakkola, T.
\newblock Se (3) diffusion model with application to protein backbone generation.
\newblock In \emph{Proceedings of the 40th International Conference on Machine Learning}, pp.\  40001--40039, 2023.

\end{thebibliography}
\bibliographystyle{icml2026}

\newpage
\appendix
\onecolumn

\section{Algorithm}
We summarize the bi-level optimization in Algorithm~\ref{alg:geocoupling}.

To maintain computational efficiency and mitigate the impact of stale estimates from early training stages, we maintain a fixed-size {Observation Buffer} $\mathcal{B}$ with capacity $N_{\max}=1000$.
The GP is fitted dynamically on $\mathcal{B}$, which stores the most recent tuples of time coordinates and their corresponding losses, ensuring the surrogate landscape reflects the current capability of the generative model.

Given the GP posterior mean $\mu_c(\mathbf{t})$, the optimal coupling $\gamma^*$ is the path that minimizes the line integral of the cost from $(0,0)$ to $(1,1)$.
This corresponds to finding the geodesic on a manifold with metric tensor $g_{ij} = \mu_c(\mathbf{t})^2 \delta_{ij}$, which is solved efficiently using Dynamic Programming (DP) on a discretized grid of $[0,1]^2$ \citep{qiupiloting}.
This approach allows us to find complex, non-monotonic dependencies (e.g., plateau phases) that simple parametric curves cannot represent.

Instead of treating an entire training run as a single scalar observation, we treat the mini-batch losses generated from the converged model as sparse, noisy observations of the surface $c(\mathbf{t})$, evaluated on a set of time pairs sampled along the current trajectory (Trust Region) and the global domain (Exploration).
These point-wise estimates are pushed into the buffer $\mathcal{B}$, displacing the oldest entries, and used to refine the landscape estimate for the next iteration.

We propose several techniques to stablize training, as a critical challenge in applying BO to this bilevel problem is \textit{off-policy evaluation}. 
If the model $\theta^*$ is trained solely on a thin deterministic manifold induced by a candidate $\gamma_k$, the evaluation of a perturbed candidate $\gamma'$ during the GP acquisition phase incurs high bias due to distribution shift.
To ensure the learned vector field $v_{\theta^*}$ remains valid for candidate couplings in the neighborhood of $\gamma_k$, we implement a {relaxed inner loop} with trajectory perturbation $\tilde{\gamma}(\tau) = \gamma(\tau) + \epsilon$ with $\epsilon \sim \mathcal{N}(0, \sigma^2)$, effectively creating a ``probability tube'' around the geodesic. This ensures $v_{\theta^*}$ covers a trust region around $\gamma_k$. Additionally,
to prevent the global landscape from deviating too far, we train with a mixture policy: with probability $p$, we sample from the current optimized coupling $\gamma_k$ (within its tube), and with probability $1-p$, we sample from a geometry-agnostic random coupling $\gamma_{\text{Rand}}$.

While Proposition~\ref{prop:vlb_transport} establishes a population-level bias–variance decomposition under idealized assumptions, our proposed solution should be viewed as an approximate stochastic estimator of the ordering induced by this decomposition, rather than a realization of the exact optimum.
It is designed to bias the search toward couplings with lower conditional variance according to the structure revealed by the proposition.

\begin{algorithm}[t]
\caption{Bi-level Optimization for Geodesic Coupling}
\label{alg:geocoupling}
\SetAlgoLined
\KwIn{BO iterations $K$, Inner steps $S$, Mixture rate $p$, Observation buffer $\mathcal{B}$ (FIFO Queue)}

Initialize GP surrogate with initial replay buffer $\mathcal{B}_0\leftarrow \varnothing$\;

\For{$k=1$ \KwTo $K$}{
    \tcc{1. Outer Loop: Propose Candidate}
    Fit GP to current buffer $\mathcal{B}_{k-1}$ to estimate cost surface $\mu_c(\mathbf{t})$\;
    
    Select candidate $\gamma_k = \arg\max_{\gamma} \alpha(\gamma; \mathcal{B}_{k-1})$ via dynamic programming\;
    
    \tcc{2. Inner Loop: Relaxed Training}
    \For{$s=1$ \KwTo $S$}{
        Sample batch $(\rvx, \ry)$\;
        Sample mode $m \sim \text{Bernoulli}(p)$\;
        \eIf{$m=1$ (Trust Region)}{
            Sample time $\tau \sim \mathcal{U}[0,1]$\;
            Construct target $t_r, t_h \sim \tilde{\gamma}(\tau) = \gamma_k(\tau) + \mathcal{N}(0, \sigma^2)$\;
        }{
            Sample independent times $t_r, t_h \sim \mathcal{U}[0,1]$\;
        }
        Update $\theta_k$ to minimize $\mathcal{L}_{\text{MSE}}(\theta_k, t_r, t_h)$\;
        
        Record the training loss $y_k \leftarrow \mathcal{L}_{\text{MSE}}(\theta_k, t_r, t_h)$\;
    Update history $\mathcal{B}_k \leftarrow \mathcal{B}_{k-1} \cup \{(t_r, t_h, y_k)\}$\;
    }

}
\end{algorithm}

\section{Proofs}

\subsection{Proof for Temporal Optimal Transport}
\begin{proof}
Plugging in the definition and by Fubini's theorem,
\begin{equation}
    \mathcal E(\gamma)
    \;=\;
    \mathbb E_{\rvx\sim\mathcal D}\Bigl[\int_0^1 \mathcal L_{\mathrm{MSE}}(\rvx,\gamma(\tau))\,d\tau\Bigr]
    \;=\;
    \int_0^1 \mathbb E_{\rvx\sim\mathcal D}\bigl[\mathcal L_{\mathrm{MSE}}(\rvx,\gamma(\tau))\bigr]\,d\tau
    \;=\;
    \int_0^1 c(\gamma(\tau))\,d\tau.
\end{equation}
Let $\pi_\gamma=\gamma_{\#}\lambda$ be the pushforward of $\lambda$ by $\gamma$. Then for any integrable
$\varphi:[0,1]^2\to\mathbb R$ we have the change-of-variables identity
\(
\int_0^1 \varphi(\gamma(\tau))\,d\tau = \int_{[0,1]^2} \varphi(t_r,t_h)\,d\pi_\gamma(t_r,t_h).
\)
Applying it to $\varphi=c$ gives
\begin{equation}
    \mathcal E(\gamma)=\int_{[0,1]^2} c(t_r,t_h)\,d\pi_\gamma(t_r,t_h).
\end{equation}
Finally, by construction the marginals of $\pi_\gamma$ are $(t_r)_{\#}\lambda=\mu_r$ and
$(t_h)_{\#}\lambda=\mu_h$, hence $\pi_\gamma\in \Pi(\mu_r,\mu_h)$. 
Therefore, minimizing $\mathcal E(\gamma)$ over deterministic monotonic schedules $\gamma$ corresponds to finding the optimal \emph{deterministic coupling}.
While strictly speaking a Monge map may not exist due to potential plateau phases (non-bijective segments), this problem is well-posed under the Kantorovich OT formulation where $\pi_\gamma$ is a singular measure supported on the curve $\gamma$.
\end{proof}

\subsection{Proof for Proposition 3.2}


\begin{proof}

For Eq.~\ref{eq:decomp}, let $\rvu^\gamma_t$ be the stochastic target associated with the path measure $Q^\gamma$. Write the squared-error term as
\begin{align}
\E\!\left[\|v_\theta(\rvx_t,t)-\rvu^\gamma_t\|^2\right]
&=
\E_{\rvx_t}\Bigl[
\E\!\left[\|v_\theta(\rvx_t,t)-\rvu^\gamma_t\|^2\mid \rvx_t\right]
\Bigr].
\end{align}
For a fixed $\rvx_t$, apply the conditional bias--variance identity with respect to the stochastic target $\rvu^\gamma_t$:
\begin{align}
\E\!\left[\|v_\theta-\rvu^\gamma_t\|^2\mid \rvx_t\right]
&=
\left\|v_\theta-\E[\rvu^\gamma_t\mid \rvx_t]\right\|^2
+\E\!\left[\left\|\rvu^\gamma_t-\E[\rvu^\gamma_t\mid \rvx_t]\right\|^2\mid \rvx_t\right] \\
&=
\underbrace{\left\|v_\theta-u^\gamma(\rvx_t,t)\right\|^2}_{\text{Bias}}
+\underbrace{\Var(\rvu^\gamma_t\mid \rvx_t)}_{\text{Variance}},
\end{align}
where we used the definition $u^\gamma(\rvx_t,t) := \E[\rvu^\gamma_t\mid \rvx_t]$.
Integrating over $t\in[0,1]$ and using the definition of $\mathcal{E}(\gamma)$ gives the desired decomposition.
\end{proof}

\subsection{Proof for Proposition 3.3}

\begin{proof}
Fix any trajectory $\gamma\in\Gamma$. Let $\rvx_t$ denote the intermediate state at time $t\in[0,1]$
(induced by $\gamma$), and let $\rvu^\gamma_t$ be the corresponding regression target.
Consider the inner-loop MSE objective
\begin{equation}
\label{eq:inner-mse}
\mathcal{L}_{\mathrm{MSE}}(\theta,\gamma)
\;:=\;
\int_0^1 \mathbb{E}\Bigl[\bigl\|v_\theta(\rvx_t,t)-\rvu^\gamma_t\bigr\|^2\Bigr]\,\mathrm{d}t.
\end{equation}
Define the conditional mean target
\begin{equation}
u^\gamma(\rvx_t,t) \;:=\; \mathbb{E}\bigl[\rvu^\gamma_t \mid \rvx_t\bigr].
\end{equation}

By the bias-variance decomposition proved above, for each $t$,
\begin{equation}
\label{eq:bv-given}
\mathbb{E}\Bigl[\bigl\|v_\theta(\rvx_t,t)-\rvu^\gamma_t\bigr\|^2\Bigr]
=
\mathbb{E}\Bigl[\bigl\|v_\theta(\rvx_t,t)-u^\gamma(\rvx_t,t)\bigr\|^2\Bigr]
+
\mathbb{E}\Bigl[\Var(\rvu^\gamma_t\mid \rvx_t)\Bigr].
\end{equation}
The second term in \eqref{eq:bv-given} is independent of $\theta$. Hence, minimizing the inner loop
over $\theta$ only affects the first term. Under the sufficient-capacity assumption and convergence
of the inner optimization, we can achieve zero approximation bias, i.e.,
\begin{equation}
\mathbb{E}\Bigl[\bigl\|v_{\theta^*}(\rvx_t,t)-u^\gamma(\rvx_t,t)\bigr\|^2\Bigr]=0
\quad\text{for a.e. }t,
\end{equation}
and therefore
\begin{equation}
\label{eq:converged-instant}
\mathbb{E}\Bigl[\bigl\|v_{\theta^*}(\rvx_t,t)-\rvu^\gamma_t\bigr\|^2\Bigr]
=
\mathbb{E}\Bigl[\Var(\rvu^\gamma_t\mid \rvx_t)\Bigr].
\end{equation}
Integrating \eqref{eq:converged-instant} over $t\in[0,1]$ yields the converged inner-loop loss
\begin{equation}
\label{eq:converged-loss}
\mathcal{L}_{\mathrm{MSE}}(\theta^*,\gamma)
=
\int_0^1 \mathbb{E}_{\rvx_t}\Bigl[\Var(\rvu^\gamma_t\mid \rvx_t)\Bigr]\,\mathrm{d}t.
\end{equation}

Finally, the outer loop selects $\gamma$ to minimize the converged loss. Using \eqref{eq:converged-loss}, this is exactly
\begin{equation}
\gamma^*
=
\arg\min_{\gamma\in\Gamma}\mathcal{L}_{\mathrm{MSE}}(\theta^*,\gamma)
=
\arg\min_{\gamma\in\Gamma}\int_0^1 \mathbb{E}_{\rvx_t}\Bigl[\Var(\rvu^\gamma_t\mid \rvx_t)\Bigr]\,\mathrm{d}t,
\end{equation}
which proves the claim.

\end{proof}



\section{Detailed Formulation for Multimodal Generative Modeling}
\label{sec:problem}

\subsection{A Unified View for Multimodal Generative Models}
\label{subsec:unified_view}

Recall that multimodal biomolecular design operates on a product manifold \(\mathcal{M} = \mathcal{M}_r \times \mathcal{M}_h\), involving continuous coordinates \(\vr \in \mathbb{R}^{N \times 3}\) and discrete features \(\vh \in \mathbb{R}^{N \times K}\) (e.g., types of residues), with $N$ the dimension of atoms or residues and \(K\) the dimension of node feature space. 

The generative goal is to learn a time-dependent vector field \(v_t\) that transports a factorized prior \(\pi_0 = p(\vr) \otimes p(\vh)\) to the joint data distribution \(\pi_1 = p_{\text{data}}(\vr, \vh)\).
We unify the training objective for continuous-time Diffusion Models, Flow Matching, and Bayesian Flow Networks as regressing a target vector $U$ from a network input state $Z$ under a squared loss. 
We formalize this by first introducing a {local-time pair} $\rvt:=(t_r,t_h)\in[0,1]^2$, and the {joint interpolation state} is then defined as the concatenation of marginal interpolants $\rvx_{\rvt} := \bigl(\psi^{r}_{t_r},\ \psi^{h}_{t_h}\bigr)$. 
The corresponding {regression target} is the joint velocity field $U_{\rvt} := \bigl(\dot\psi^{r}_{t_r},\ \dot\psi^{h}_{t_h}\bigr)$,
where $\dot\psi^{r}_{t_r}:=\partial_{t_r}\psi^{r}_{t_r}$ and $\dot\psi^{h}_{t_h}:=\partial_{t_h}\psi^{h}_{t_h}$. The network input is defined as the tuple $Z_{\rvt} := (\rvx_{\rvt},\rvt)$.

\paragraph{The General MSE Objective.}
The core difference between existing methods lies in the coupling distribution $\pi(\rvt)$ over local times. The generic training objective is:
\begin{equation}
    \mathcal L(\theta;\pi) = \mathbb E_{\rvt\sim\pi}\, \mathbb E\bigl[ \|v_\theta(Z_{\rvt})-U_{\rvt}\|^2 \bigr].
    \label{eq:mse_objective}
\end{equation}
And the ideal global minimizer satisfies the regression identity:
\begin{equation}
    v_{\theta^*}(Z_t)=\mathbb E[U_t\mid Z_t].
    \label{eq:cond_mean}
\end{equation}

Here, we show that for non-autoregressive generation approaches such as diffusion models (DM), flow matching (FM) and Bayesian flow networks (BFN), their training objective can be written as
$\mathcal L(\theta;\pi)=\mathbb E_{\rvt\sim\pi}\,\mathbb E\bigl[\|v_\theta(Z_{\rvt})-U_{\rvt}\|^2\bigr]$
by specifying the time law $\pi(\rvt)$ and the pair $(Z_{\rvt},U_{\rvt})$.

\paragraph{Diffusion Models (DM).}
A standard Gaussian forward noising process is
\begin{equation}
    Z_{\rvt} \equiv X_{\rvt} = \alpha_{\rvt} X_0 + \sigma_{\rvt}\,\varepsilon,
    \qquad X_0\sim p_{\rm data},\ \varepsilon\sim\mathcal N(0,I),
\end{equation}
with $\rvt\sim\pi_{\rm diff}$ (often uniform in $[0,1]$, or uniform in log-SNR).
Three common parameterizations correspond to different targets $U_{\rvt}$, and the details can be referred to in \citet{kingma2024understanding}.

For \text{(i) $\epsilon$-prediction:}
\begin{align}
    \quad
    &v_\theta(Z_{\rvt}) \equiv \epsilon_\theta(X_{\rvt},\rvt),\qquad
    U_{\rvt}\equiv \varepsilon, \\
    &\Rightarrow\ 
    \mathcal L_{\rm diff}(\theta)
    =\mathbb E_{\rvt\sim\pi_{\rm diff}}\,
      \mathbb E_{X_0,\varepsilon}\bigl[\|\epsilon_\theta(X_{\rvt},\rvt)-\varepsilon\|^2\bigr].
\end{align}
For \text{(ii) score-prediction:}
\begin{align}
    \quad
    &v_\theta(Z_{\rvt}) \equiv s_\theta(X_{\rvt},\rvt),\qquad
    U_{\rvt}\equiv \nabla_{x}\log q_{\rvt}(x\mid X_0)\big|_{x=X_{\rvt}}
    = -\frac{1}{\sigma_{\rvt}}\varepsilon, \\
    &\Rightarrow\
    \mathcal L_{\rm score}(\theta)
    =\mathbb E_{\rvt\sim\pi_{\rm diff}}\,
      \mathbb E_{X_0,\varepsilon}\Bigl[\Bigl\|s_\theta(X_{\rvt},\rvt)+\frac{1}{\sigma_{\rvt}}\varepsilon\Bigr\|^2\Bigr].
\end{align}
For \text{(iii) $\nu$-prediction:}
\begin{align}
\quad
&v_\theta(Z_{\rvt}) \equiv v_\theta(X_{\rvt},\rvt), \\
&U_{\rvt}\equiv \nu_{\rvt}
:= \alpha_{\rvt}\,\varepsilon - \sigma_{\rvt}\,X_0, \\
&\Rightarrow\ 
\mathcal L_{\nu\text{-pred}}(\theta)
=\mathbb E_{\rvt\sim\pi_{\rm diff}}\,
  \mathbb E_{X_0,\varepsilon}\bigl[\|v_\theta(X_{\rvt},\rvt)-(\alpha_{\rvt}\varepsilon-\sigma_{\rvt}X_0)\|^2\bigr].
\end{align}

\paragraph{Flow Matching (FM).}
Let $p_0$ be a base distribution and $p_1=p_{\rm data}$. Choose a coupling
$\gamma(x_0,x_1)$ with $(X_0,X_1)\sim\gamma$, and sample $\rvt\sim\pi_{\rm FM}$
(typically $\mathcal{U}[0,1]$). Pick a path $\psi_t$ and define
\begin{equation}
    Z_{\rvt}\equiv z_{\rvt}=\psi_{\rvt}(X_0,X_1),
    \qquad
    U_{\rvt}\equiv u_{\rvt}=\partial_t\psi_t(X_0,X_1)\big|_{t=\rvt}.
\end{equation}
Then
\begin{equation}
    \mathcal L_{\rm FM}(\theta)
    =\mathbb E_{\rvt\sim\pi_{\rm FM}}\,
      \mathbb E_{(X_0,X_1)\sim\gamma}\bigl[\|v_\theta(z_{\rvt},\rvt)-u_{\rvt}\|^2\bigr].
\end{equation}
For the linear interpolation path $\psi_t(X_0,X_1)=(1-t)X_0+tX_1$,
\begin{equation}
    Z_{\rvt}=(1-\rvt)X_0+\rvt X_1,\qquad U_{\rvt}=X_1-X_0.
\end{equation}

\paragraph{Discrete Flow Model (DFM).}
For discrete flow models on finite state space introduced in \citet{pmlr-v235-campbell24a}, the intermediate-state distribution $X_t \sim p_{t\mid 1}(\cdot\mid X_1)$ defined by the conditional transition rates 
$R_t(x,t\mid x_1)$, we have
\begin{equation}
    Z_{\rvt}\equiv z_{\rvt}:=(X_{\rvt},\rvt),
    \qquad
    U_{\rvt}\equiv u_{\rvt}:=X_1.
\end{equation}
The model outputs a categorical distribution
\begin{equation}
    v_\theta(Z_t)\;\equiv\;Q_\theta(\cdot\mid Z_{\rvt}) = p^\theta_{1\mid \rvt}(\cdot\mid X_{\rvt}) \in \Delta^{N-1},
\end{equation}
which is trained with the cross-entropy loss:
\begin{equation}
\mathcal L_{\rm DFM}(\theta)
=
\mathbb E_{x_1\sim p_{\rm data}}
\mathbb E_{t\sim\mathcal U([0,1])}
\mathbb E_{x_t\sim p_{t|1}(\cdot\mid x_1)}
\Bigl[-\log p^\theta_{1|t}(x_1\mid x_t)\Bigr].
\end{equation}
Equivalently, writing the true conditional $p(\cdot\mid Z_{\rvt}) := p(U_{\rvt}=\cdot\mid Z_{\rvt})$,
\begin{equation}
\mathbb E\bigl[-\log Q_\theta(U_{\rvt}\mid Z_{\rvt})\bigr]
=
H\bigl(p(\cdot\mid Z_{\rvt})\bigr)
+
\mathrm{KL}\!\Bigl(p(\cdot\mid Z_{\rvt})\,\big\|\,Q_\theta(\cdot\mid Z_{\rvt})\Bigr),
\end{equation}
so minimizing $\mathcal L_{\rm DFM}$  corresponds to minimizing the KL-divergence.

We admit that this is generally not an MSE form, but the conclusion can be generalized to Bregman divergence \citep{JMLR:v6:banerjee05b}.
\begin{equation}
\mathbb E\Big[\mathrm{KL}\bigl(p(\cdot\mid Z)\,\|\,Q(\cdot\mid Z)\bigr)\Big]
=
\mathrm{KL}\bigl(p(\cdot\mid Z)\,\|\,\bar q(\cdot\mid Z)\bigr)
\;+\;
\mathbb E\Big[\mathrm{KL}\bigl(\bar q(\cdot\mid Z)\,\|\,Q(\cdot\mid Z)\bigr)\Big],
\end{equation}
where $p(\cdot\mid Z)\equiv p(U=\cdot\mid Z)$, and the ensemble mean is defined by $\bar q(\cdot\mid Z):=\mathbb E[Q(\cdot\mid Z)]$.
Taking $Z\equiv Z_{\rvt}$ and then averaging over $\rvt\sim\pi_{\rm DFM}$ yields a global
``bias--variance'' decomposition for the DFM cross-entropy risk, with the bias term measured by
$\mathrm{KL}(p\|\bar q)$ and the variance term by $\mathbb E[\mathrm{KL}(\bar q\|Q)]$.


\paragraph{Bayesian Flow Networks (BFN).}
BFN uses a \emph{belief state} that accumulates information over time. For
continuous-time Gaussian form that parameterizes time by a (monotone) precision
schedule $\lambda(t)\ge 0$. With $X\sim p_{\rm data}$ and $\varepsilon\sim\mathcal N(0,I)$,
the (posterior) belief mean can be written as
\begin{equation}
    Z_t \equiv \mu_t
    = \frac{\lambda(t)}{1+\lambda(t)}\,X
    + \frac{\sqrt{\lambda(t)}}{1+\lambda(t)}\,\varepsilon,
    \qquad \mathrm{Var}[X\mid \mu_t]=\frac{1}{1+\lambda(t)}I.
\end{equation}
A standard BFN training target is the clean datum:
\begin{equation}
    v_\theta(Z_t)\equiv \hat X_\theta(\mu_t,t),
    \qquad
    U_t \equiv X.
\end{equation}
The continuous-time BFN objective is typically weighted by the information rate
$\dot\lambda(t)$; under \eqref{eq:mse_objective} this can be absorbed into $\pi$ by
\begin{equation}
    \pi_{\rm BFN}(t)\ \propto\ \dot\lambda(t),
\end{equation}
yielding
\begin{equation}
    \mathcal L_{\rm BFN}(\theta)
    =\mathbb E_{\rvt\sim\pi_{\rm BFN}}\,
      \mathbb E_{X,\varepsilon}\bigl[\|\hat X_\theta(\mu_{\rvt},\rvt)-X\|^2\bigr].
\end{equation}

For discrete data, Let $\{ \rve_k\}_{k=1}^K$ be the standard basis of $\mathbb R^K$ for $K$ classes, and define the concatenated one-hot
embedding
\begin{equation}
    \rve_X \;:=\; \bigl(\rve_{X^{(1)}},\dots,\rve_{X^{(D)}}\bigr)\in\mathbb R^{DK}.
\end{equation}
Sample $t\sim \mathcal U(0,1)$ and a BFN {belief state}
$\vartheta_t \sim p_F(\vartheta\mid X,t)$ from the forward Bayesian flow.
We have
\begin{equation}
    Z_t \equiv z_t := (\vartheta_t,t),
    \qquad
    U_t \equiv u_t := \rve_X .
\end{equation}

The network predicts per-dimension categorical probabilities
$p^{(d)}_{o,\theta}(k\mid \vartheta_t;t)$ for $d\in\{1,\dots,D\}$, $k\in\{1,\dots,K\}$.
Define the predicted expected one-hot vector in each dimension by
\begin{equation}
    \hat{\rve}^{(d)}_\theta(\vartheta_t,t)
    \;:=\;
    \sum_{k=1}^K p^{(d)}_{o,\theta}(k\mid \vartheta_t;t)\,\rve_k
    \;\in\;\mathbb R^{K},
\end{equation}
and concatenate
\begin{equation}
    v_\theta(Z_t)
    \;\equiv\;
    \hat{\rve}_\theta(\vartheta_t,t)
    \;:=\;
    \bigl(\hat{\rve}^{(1)}_\theta(\vartheta_t,t),\dots,\hat{\rve}^{(D)}_\theta(\vartheta_t,t)\bigr)
    \;\in\;\mathbb R^{DK}.
\end{equation}

The BFN objective for discrete data can be written as
\begin{equation}
    \mathcal L_{\rm BFN}(\theta)
    \;=\;
    K\,\beta(1)\;
    \mathbb E_{X\sim p_{\rm data}}
    \mathbb E_{t\sim \mathcal U(0,1)}
    \mathbb E_{\vartheta_t\sim p_F(\cdot\mid X,t)}
    \Bigl[
        t\,\bigl\|U_t - v_\theta(Z_t)\bigr\|^2
    \Bigr].
\end{equation}

\subsection{Detailed Formulation of Temporal Couplings}


Standard approaches implicitly assume that the factorization of the prior extends to the generative trajectory \citep{guan_3d_2023, qu2024molcraft}. By sharing a single global progress variable $t \in [0,1]$, they enforce a \textit{Synchronous Coupling}:
\begin{definition}[Synchronous Temporal Coupling]
Let $\psi_t^r$ and $\psi_t^h$ denote the method-specific marginal trajectories induced by
the corresponding noise schedules $\sigma_r(t), \sigma_h(t)$ and endpoint distributions $\pi_0, \pi_1$.
The joint interpolation path under synchronous coupling is
\begin{equation}
    \rvx_t^{\text{sync}}
    =
    \bigl( \psi_t^r(\vr_0, \vr_1),\ \psi_t^h(\vh_0, \vh_1) \bigr),
    \qquad t \in [0,1].
\end{equation}
\end{definition}

This imposes a rigid topological constraint: the joint trajectory is confined to a one-dimensional curve
$\{\rvx_t^{\text{sync}}:t\in[0,1]\}$, thus the learned vector field is trained to satisfy
\begin{equation}
v_{\theta}(\rvx_t^{\text{sync}},t)
\;\approx\;
\frac{d}{dt}\rvx_t^{\text{sync}}
=
\bigl(\dot{\psi}_t^r,\ \dot{\psi}_t^h\bigr),
\end{equation}
thereby forcing both modalities to evolve synchronously and monotonically with respect to their marginal noise levels.

We identify that this synchronous coupling fails to model \textit{asynchronous dependencies} common for hierarchical biomolecular assembly. 
For example, protein backbone topology often constrains side-chain packing, implying a causal structure where the formation of secondary structure elements (backbone) often precedes the determination of sequence, and the sequence again affects specific packing of side-chains and the adjustment of overall structure.
However, by forcing synchronous convergence, the model is unable to explore these potentially low-energy regions of the fitness landscape, as the learned vector field has not been trained to approximate the gradients on off-diagonal interpolation states.

\begin{table*}[t]
\centering
\caption{Comparison between synchronous, random, and learned time couplings.}
\label{tab:coupling-notation}
\resizebox{0.9\linewidth}{!}{
\begin{tabular}{@{}llll@{}}
\toprule
Coupling type
& Coupling measure $\pi$ on $[0,1]^2$
& Deterministic map $\gamma(t)$ 
& Interpolation state $\rvx_{\rvt}$ \\ \midrule

Synchronous
& $\pi_{\mathrm{sync}} = (\gamma_{\mathrm{sync}})_{\#}\mathrm{U}[0,1]$
& $\gamma_{\mathrm{sync}}(t)=(t,t)$
& $\rvx_{t}^{\mathrm{sync}}=(r_{t},h_{t})$ \\

Random (independent)
& $\pi_{\mathrm{rand}}=\mathrm{U}([0,1]^2)$
& N/A (not a scalar path)
& sample $\rvt\sim\pi_{\mathrm{rand}}$, then $\rvx_{\rvt}=(r_{t_r},h_{t_h})$ \\

Learned (GeoCoupling)
& $\pi_{\gamma}=(\gamma)_{\#}\mathrm{U}[0,1]$
& $\gamma(t)=\bigl[t_r(t),t_h(t)\bigr]^\top$
& $\rvx_{t}^{\gamma}=(r_{t_r(t)},h_{t_h(t)})$ \\

\bottomrule
\end{tabular}
}
\end{table*}

To alleviate dynamic mismatch, recent works have proposed \textit{Random Coupling}, where modalities are corrupted by independent timesteps $\rvt = (t_r, t_h) \sim \mathcal{U}([0,1]^2)$ during training. It succeeds in covering the full product space and allows the model to see decoupled interpolation states (e.g., noisy structure with clean sequence). With an MSE objective typical in diffusion, flow matching and Bayesian flow network, the learned
vector field satisfies
\begin{equation}
v_{\theta^*}(\rvx_{\rvt},\rvt)
\;=\;
\mathbb{E}\!\left[
(\dot\psi^r_{t_r},\dot\psi^h_{t_h})\mid \rvx_{\rvt},t_r,t_h
\right],
\end{equation}
i.e., it regresses to the conditional mean target rather than matching a single deterministic path derivative.
Therefore, the structure modality update can condition on the sequence modality at arbitrary noise levels and vice versa, thereby unlocking \emph{asynchronous dependencies}.

However, it introduces a severe learnability bottleneck and a severe training-inference gap rooted in the variance of the regression target.
Under random coupling, denoising times $(t_r, t_h)$ are sampled independently and uniformly from $[0,1]^2$.
The resulting reference vector field, if assumed to be under the flow matching objective,
\[
u_t^{\mathrm{rand}}(x_t) = \mathbb{E}[x_1 - x_0 \mid x_t]
\]
corresponds to a mixture over all noise levels and modality states.

Such mixing leads to a high conditional variance, since samples with incompatible noise semantics are aggregated when estimating $u_t$.
Formally, the law of total variance implies
\[
\operatorname{Var}(u_t^{\mathrm{rand}} \mid x_t)
=
\mathbb{E}[\operatorname{Var}(u_t \mid x_t, t_r, t_h)]
+
\operatorname{Var}(\mathbb{E}[u_t \mid x_t, t_r, t_h]),
\]
where the second term is maximized when $(t_r,t_h)$ are drawn independently.
This explains the empirically observed instability and slower convergence under random coupling.

Using Eq.~\ref{eq:bv-given}, this decomposition reveals the fundamental dilemma for current coupling methods:
\begin{itemize}
    \item \textbf{High Target Variance (Random Coupling):} When training with fully decoupled timesteps, the conditioning state \(Z_t\) (a noisy structure and sequence) can arise from valid data via infinitely many combinations of \((\vr_0, \vr_1, \vh_0, \vh_1)\) and time pairs. This maximizes the entropy of the posterior distribution \(p(U_t | Z_t)\). The network, learning the conditional mean \(\mathbb{E}[U_t|Z_t]\), produces a ``washed-out'' vector field that averages over conflicting directions, failing to capture sharp modes.
    \item \textbf{High Model Bias (Synchronous Coupling):} Conversely, synchronous coupling reduces variance but imposes a high bias. The arbitrary lockstep trajectory likely deviates significantly from the true geodesic of the data manifold, resulting in a complex vector field with high curl that is difficult for the network \(v_\theta\), incurring high approximation error.
\end{itemize}

This analysis motivates the need for our \ours: a method that minimizes the \textit{total} transport cost. This requires finding a coupling \(\gamma\) that is sufficiently deterministic to minimize Target Variance, yet flexible enough to follow the data's intrinsic geodesic to minimize model bias, effectively finding the natural order and dynamically allocating the time budget for biomolecular generation.

\section{Implementation Details}
\subsection{Structure-based Drug Design}
\paragraph{Training and Inference}
We train the model using the Adam optimizer with a learning rate of $5 \times 10^{-4}$ and a batch size of 16. All experiments are conducted on a single NVIDIA A100 GPU (80GB). The training converges within 100K steps (approximately 12 hours), which reduces the training duration by nearly 50\% compared to previous baselines. To stabilize the learning process, we apply an Exponential Moving Average (EMA) with a decay of 0.9 for the coupling updates. For inference, we perform 100 sampling steps, utilizing the variance reduction sampling strategy proposed by \citet{qu2024molcraft}.



\paragraph{Hyperparameters for Network}

Following the configuration in \citet{qiupiloting}, we construct our network using $k$-Nearest Neighbor (kNN) graphs with $k=32$. The model consists of $N=9$ layers with a hidden dimension of $d=128$, employing 16 attention heads and a dropout rate of 0.1. The total parameter count is approximately 3.1 million.

\subsection{Protein Co-design}

\paragraph{Training and Inference}
We train the model on a cluster of 8 NVIDIA A100 GPUs for approximately one day. Optimization is performed using AdamW with a learning rate of $10^{-4}$. To maximize GPU utilization while fitting variable-length proteins, we employ dynamic batching with a maximum batch size of 100 and a \texttt{max\_num\_res\_squared} threshold of $800,000$.


\paragraph{Network Architecture}
Following the configuration in \citet{pmlr-v235-campbell24a}, we construct the same backbone that is adapted from \citet{yim2023se3}, including 8 Transformer blocks with 4 layers and a dropout rate of 0.2, Invariant Point Attention (IPA) with hidden dimension 16, and an added 3-layered MLP for amino acid type prediction over 21 tokens (20 amino acids + 1 mask token). The total parameter count is approximately 21.8 million.

\section{Experiment Setup}\label{app:setup}

\subsection{Structure-based Drug Design}

\paragraph{Baselines}
To provide a comprehensive evaluation, we benchmark our method against a diverse set of representative SBDD approaches, categorized by their underlying generative paradigms:

\begin{itemize}
    \item \textbf{Autoregressive Models:} These methods construct molecules sequentially. \textbf{AR} \citep{luo_3d_2022} generates atoms atom-by-atom conditioned on the protein pocket using voxel-based density predictions and MCMC sampling. \textbf{Pocket2Mol} \citep{peng_pocket2mol_2022} improves upon this by employing an E(3)-equivariant geometric neural network to capture frontier atoms, thereby enhancing sampling efficiency and chemical validity.

    \item \textbf{Diffusion Models:} This category includes score-based generative models. \textbf{DiffSBDD} \citep{schneuing2024structure} introduces an E(3)-equivariant diffusion framework that jointly denoises atom types and coordinates in continuous space. \textbf{TargetDiff} \citep{guan_3d_2023} utilizes a non-equivariant SE(3) diffusion process, decomposing the generation of continuous coordinates and discrete atom types. \textbf{DecompDiff} \citep{guan_decompdiff_2023} further incorporates structural priors by decomposing ligands into scaffolds and side chains (arms) to guide the diffusion process.

    \item \textbf{BFN:} \textbf{MolCRAFT} \citep{qu2024molcraft} adapts the Bayesian Flow Network (BFN) to the 3D molecular domain, employing variance reduction techniques to outperform diffusion baselines. \textbf{MolPilot} \citep{qiupiloting} advances this frontier by proposing a VLB-Optimal Scheduling (VOS) strategy, training through decoupled timesteps and searching for a joint schedule at test time. Building on the VOS strategy, TargetDiff* is a reimplimented version for decoupled training and test-time schedule derivation.

    \item \textbf{Flow Matching:} \textbf{DrugFlow} \citep{schneuing2025multidomain} integrates continuous flow matching with discrete Markov bridges, endowed with an uncertainty estimate and learnable number of atoms.
\end{itemize}

\paragraph{Metrics}
We assess the quality of generated ligands using a suite of metrics covering binding affinity, chemical properties, and structural consistency:

\begin{itemize}
    \item \textbf{Binding Affinity:} We utilize AutoDock Vina \citep{eberhardt2021autodock} to estimate binding strength. We report:
    (1) \textbf{Vina Score}: The raw affinity of the generated pose within the pocket;
    (2) \textbf{Vina Min}: The affinity after performing local energy minimization;
    (3) \textbf{Vina Dock}: The optimal affinity achieved after re-docking the generated molecule.
    
    \item \textbf{Molecular Quality \& Properties:} Computed via RDKit, we verify drug-likeness using the Quantitative Estimation of Drug-likeness (\textbf{QED}) and assess ease of synthesis with the Synthetic Accessibility (\textbf{SA}) score. 

    \item \textbf{Structural Consistency:} To verify the stability of the binding modes, we calculate the \textbf{self-consistency RMSD (scRMSD)}, which measures the deviation between the generated pose and its re-docked conformation (by Vina). We report the percentage of molecules with an scRMSD $< 2\text{\AA}$ compared to the redocked binding poses.
\end{itemize}

\subsection{Protein Co-design}

\paragraph{Baselines}
Following \citet{lu2026simpledesign}, we compare our method against a comprehensive set of state-of-the-art protein co-design models, categorized by their generative representations and mechanisms:

\begin{itemize}
    \item \textbf{Multimodal Language Models} that relies on structure tokenization, including \textbf{ESM3} \citep{hayes2025simulating} trained via masked generative modeling, and \textbf{DPLM-2} \citep{wang2025dplm2} that focuses on adapting a large-scale sequence-pretrained DPLM \citep{wang2024diffusion} for joint modeling. It is finetuned with discrete diffusion training objectives with a quantization-based tokenizer to convert continuous coordinates into discrete structural tokens. 

    \item \textbf{Sequence-Structure Diffusion:} \textbf{ProteinGenerator} \citep{lisanza2025multistate} utilizes a sequence-space diffusion process with folding performed by RoseTTAFold, allowing for the simultaneous generation of sequences and structures conditioned on desired functional attributes.
    
    \item \textbf{Hybrid Flow Matching:} \textbf{MultiFlow} \citep{pmlr-v235-campbell24a} combines continuous flow matching for SE(3) backbone structures with discrete flow models for amino acid sequences, enabling synchronized updates across both modalities during the generative trajectory.

    \item \textbf{All-Atom Latent Models:} \textbf{Protpardelle} \citep{chu2024all} introduces an all-atom diffusion model that manages the discrete-continuous nature of sidechains through a superposition state, collapsing them into specific conformations during sampling. \textbf{La-Proteina} \citep{geffner2026laproteina} proposes a partially latent flow matching framework, where the coarse backbone is modeled explicitly while atomistic details and sequences are captured via per-residue latent variables, effectively decoupling the complexity of side-chain packing.
\end{itemize}

We also include backbone-first methods to better position the co-design baselines in the wider context, such as \textbf{RFdiffusion} \citep{watson2023novo} integrated with ProteinMPNN \citep{dauparas2022robust}, and \textbf{CarbonNovo} \citep{pmlr-v235-ren24e} that improves this two-stage pipeline using an energy-based model with recycling, moving closer to a jointly co-design baseline.

\paragraph{Metrics}
Following standard evaluation protocols in co-design tasks \citep{yim2023se3, pmlr-v235-campbell24a}, we evaluate the generated proteins from complementary perspectives, focusing on designability, foldability, diversity and novelty:

\begin{itemize}
    \item \textbf{Co-designability:} This measures the self-consistency between the generated sequence and structure. We predict the structure of the generated sequence using ESMFold \citep{lin2023evolutionary} as the oracle folding model, and calculate the Root Mean Square Deviation (RMSD) between the generated backbone and the predicted structure. We report the self-consistency RMSD (\textbf{scRMSD}), specifically the success rate where scRMSD $< 2.0 \text{\AA}$, indicating that the generated sequence confidently folds into the designed structure.

    \item \textbf{Designability:} This measures whether a generated backbone can be realized by a compatible amino-acid sequence, independent of the sequence proposed by the generative model itself. Following established evaluations for backbone generation, we redesign each generated backbone using ProteinMPNN and then predict the structures of the redesigned sequences with ESMFold as the oracle folding model. We compute the  scRMSD between the original generated backbone and the ESMFold-predicted structure, and report the success rate where scRMSD $< 2.0 \text{\AA}$. In our co-design setting, we used a single ProteinMPNN redesign per backbone (\textbf{PMPNN-1}) to enable a fair comparison with co-design models, which typically generate only one sequence per structure. However, standard backbone-design benchmarks such as ProteinBench \citep{ye2025proteinbench} evaluate designability using eight ProteinMPNN samples per backbone and report the \textbf{best-of-8} success rate (\textbf{PMPNN-8}), counting a backbone as designable if at least one redesigned sequence folds back to the target structure, thus we additionally provide PMPNN-8 designability for better contextualization.

    \item \textbf{Foldability:} We report the average predicted Local Distance Difference Test (\textbf{pLDDT}) scores from ESMFold for the generated sequences. High pLDDT scores indicate that the sequence is predicted to fold into a well-defined, stable structure with high confidence.
    
    \item \textbf{Diversity:} We assess the diversity of the generated samples in both structural and sequence spaces. Structural diversity is measured by \textbf{TM-score diversity}, calculated as $1 - \text{average pairwise TM-score}$. Sequence-structure diversity is evaluated using \textbf{Foldseek cluster diversity}, defined as the total number of unique clusters identified by Foldseek \citep{van2024fast} within the generated set, ensuring the model does not collapse to a few modes.
    
    \item \textbf{Novelty:} To determine whether the model generates novel folds rather than memorizing training data, we calculate the average maximum \textbf{TM-score} between each designable generated protein and its nearest neighbor in the PDB database with a cutoff date 2025-12-16 (retrieved via Foldseek). A lower value indicates higher structural novelty.
    
\end{itemize}

\paragraph{Inference}
For all baselines, we run unconditional generation with $N=100$ samples for each target length $L\in\{100,200,300,400,500\}$, and use the released pretrained checkpoints from the corresponding official implementations.

\begin{itemize}
    \item \textbf{ESM3}\footnote{\url{https://github.com/evolutionaryscale/esm}} is evaluated with \texttt{esm3\_sm\_open\_v1} in two generation orders, sequence$\to$structure and structure$\to$sequence. Since the order and per-track temperature are meaningful choices for ESM3, we use $T=1.0$ for the first modality and $T=0.7$ for the second modality, with the repository-recommended number of sampling steps for sequence and structure tokens.

    \item \textbf{DPLM2}\footnote{\url{https://github.com/bytedance/dplm}} is run with the \texttt{airkingbd/dplm2\_650m} checkpoint and the released structure tokenizer. We use the DPLM2 sequence--structure co-generation mode with the recommended \texttt{annealing@2.0:0.1} strategy and 500 denoising iterations.

    \item \textbf{ProteinGenerator}\footnote{\url{https://github.com/RosettaCommons/protein_generator}} is run with the released sequence-diffusion checkpoint and the repository-recommended unconditional sampling configuration.

    \item \textbf{MultiFlow}\footnote{\url{https://github.com/jasonkyuyim/multiflow}} is evaluated using the official unconditional inference configuration, \texttt{inference\_unconditional}, together with the released unconditional model checkpoint.

    \item \textbf{Protpardelle}\footnote{\url{https://github.com/ProteinDesignLab/protpardelle}} is sampled with the released all-atom model and its default unconditional sampling configuration, \texttt{configs/uncond\_sampling.yml}.

    \item \textbf{Protpardelle-1c}\footnote{\url{https://github.com/ProteinDesignLab/protpardelle-1c}} is evaluated with the released Zenodo model bundle using the unconditional all-atom configuration \texttt{["cc91", "383", "sampling\_unconditional\_allatom\_s1"]}. We disable internal ProteinMPNN sampling for this variant so that the generated all-atom samples are evaluated as produced.

    \item \textbf{La-Proteina}\footnote{\url{https://github.com/NVIDIA-Digital-Bio/la-proteina}} is evaluated in both released unconditional variants, \texttt{LD1\_ucond\_notri\_512.ckpt} and \texttt{LD2\_ucond\_tri\_512.ckpt}, paired with the released autoencoder. We follow the repository's unconditional code-generation configuration and perform all designability metrics afterwards with our unified evaluation pipeline.

    \item \textbf{RFdiffusion + ProteinMPNN}\footnote{\url{https://github.com/RosettaCommons/RFdiffusion}} uses RFdiffusion for backbone generation and the official ProteinMPNN implementation\footnote{\url{https://github.com/dauparas/ProteinMPNN}} for inverse folding. RFdiffusion follows its default inference configuration. We report both the single-sequence ProteinMPNN setting (PMPNN-1) and the best-of-8 setting (PMPNN-8) for designability comparison.

    \item \textbf{CarbonNovo}\footnote{\url{https://github.com/CarbonMatrixLab/carbonnovo}} is run from the official repository with the released \texttt{carbonnovo.ckpt} and \texttt{igso3} parameter bundle, following the default inference configuration.
\end{itemize}

\begin{figure}[htbp]
    \centering
    \includegraphics[width=0.3\linewidth]{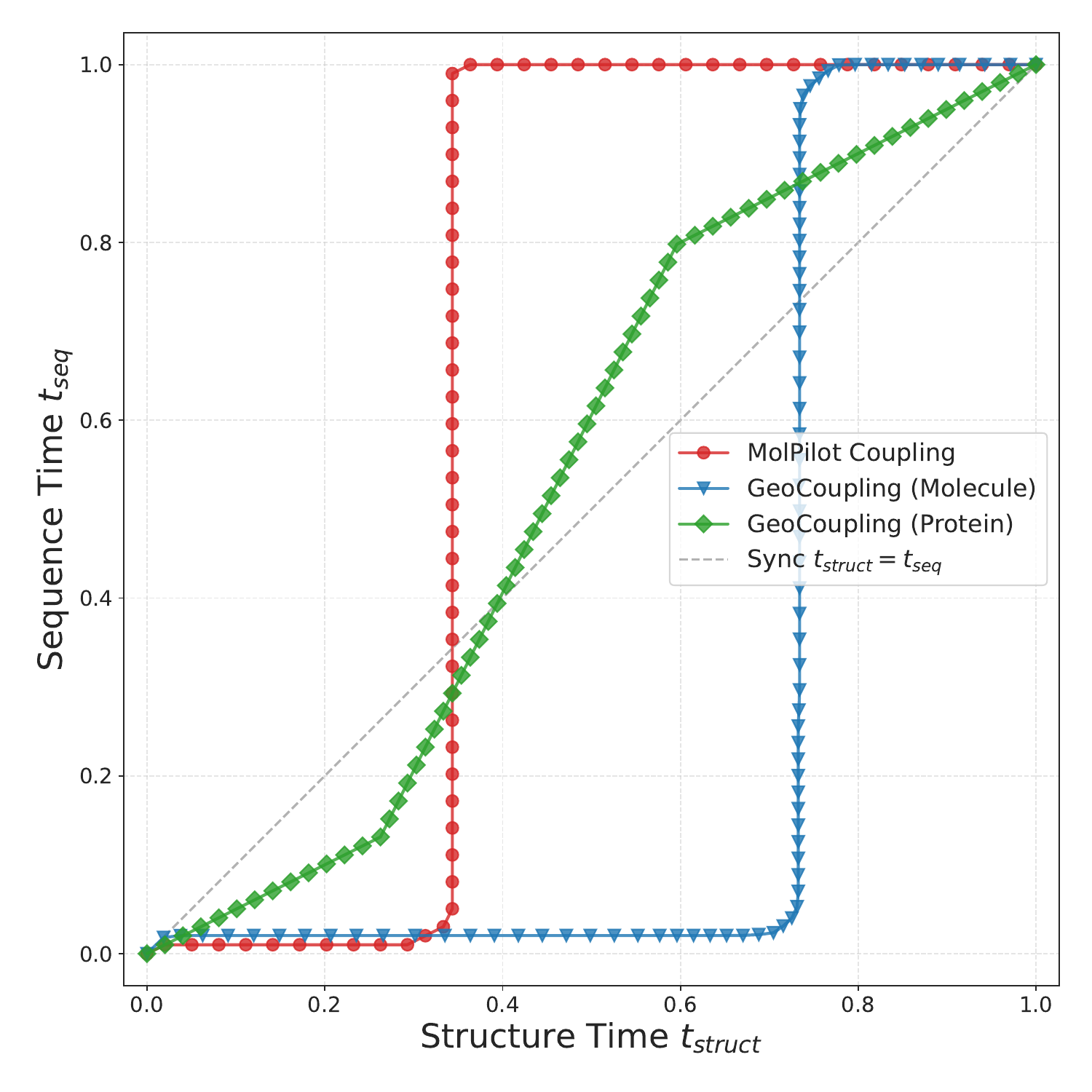}
    \caption{Visualization of the learned coupling against baselines.}
    \label{fig:more_comparison}
\end{figure}

\section{Extended Evaluation Results}\label{app-sec:eval_result}

\subsection{Protein Design}

\paragraph{Comprehensive Baseline Evaluation}
We additionally report redesign-based designability by ProteinMPNN to separate backbone quality from native joint consistency. 
Concretely, \textbf{PMPNN-1 Des.} redesigns each generated backbone with one ProteinMPNN sequence, whereas \textbf{PMPNN-8 Des.} reports the best-of-8 success rate, following standard backbone-generation benchmarks. 
As shown in Table~\ref{tab:appendix_full_protein}, \ours~balances strong joint co-design performance with high backbone designability under this structure-isolated evaluation, obtaining strong scores of $0.72$ (PMPNN-1) and $0.91$ (PMPNN-8), showing that its generated backbones remain highly realizable even under an external sequence redesign model. 

We also include \textbf{CarbonNovo} and \textbf{RFdiffusion+PMPNN} as additional structure-oriented baselines. CarbonNovo is competitive on redesignability ($0.75$ on PMPNN-8 and $0.47$ on PMPNN-1), while RFdiffusion performs slightly worse on both native and redesign-based metrics. 
The reason accounting for this performance degradation is attributed to the evaluated length range. Our benchmark setting averages over lengths $[100,200,300,400,500]$, whereas prior works often focus on shorter proteins up to length $250$. Since longer proteins are substantially more challenging, including lengths $300$--$500$ naturally lowers the overall designability average. Importantly, RFdiffusion remains strong on shorter lengths: when we restrict evaluation to $[100,200]$, our replicated RFdiffusion+PMPNN achieves $0.938 \pm 0.023$ PMPNN-8 designability. Therefore, the discrepancy mainly reflects the inclusion of more difficult long-length targets rather than a contradiction with earlier results.

\paragraph{Secondary Structure Analysis}
To test whether overall co-designability is inflated by a collapse toward mainly-$\alpha$ folds, we report in Table~\ref{tab:appendix_secondary_structure} to show that \ours{} generates proteins with similar secondary structure topology relative to MultiFlow and is not biased toward helices.

We further stratify generated proteins by secondary-structure class. Since CATH fold labels are unavailable in the unconditional setting, we follow \citet{chou1995novel} and group samples into mainly-$\alpha$ ($\alpha \geq 40\%, \beta \leq 5\%$), mainly-$\beta$ ($\alpha \leq 5\%, \beta \geq 40\%$), mixed $\alpha/\beta$ ($\alpha \geq 15\%, \beta \geq 15\%$), and others. 
Under the stratified analysis, \ours{} remains strongly designable across all classes. Compared with MultiFlow, \ours{} is consistently showing better co-designability with less variance, especially from $59\%$ to $83\%$ on others. These results suggest that the gains of \ours{} cannot be explained simply by a helical-collapse bias, but reflect more robust co-design performance across secondary-structure regimes.

\begin{table}[t]
\centering
\caption{Unconditional protein co-design results from 100 to 500 with sample size $N=100$.
        Top-2 mean values are \textbf{bold} / \ul{underlined}. Metrics are reported as mean $\pm$ sample std across three independent runs. We explicitly contextualize joint \textit{Co-designability} (native joint consistency) and structure-isolated \textit{Designability} using ProteinMPNN-1 and best-of-8 ProteinMPNN sequences for a more comprehensive comparison.}
\label{tab:appendix_full_protein}
\scriptsize
\resizebox{\linewidth}{!}{
\begin{tabular}{lccccccc}
    \toprule
    \multirow{2.5}{*}{\textbf{Method}} & \multicolumn{4}{c}{\textbf{Generation Quality} ($\uparrow$)} & \multicolumn{2}{c}{\textbf{Diversity} ($\uparrow$)} & \textbf{Novelty} ($\downarrow$) \\
    \cmidrule(lr){2-5} \cmidrule(lr){6-7} \cmidrule(lr){8-8}
     & Co-designable & PMPNN-1 Des. & PMPNN-8 Des. & pLDDT & 1 - Pairwise TM & FS Clusters & Max TM \\
    \midrule
    ProteinGenerator          & 0.11 $\pm$ 0.01 & 0.37 $\pm$ 0.02 & 0.48 $\pm$ 0.02 & 55.25 $\pm$ 0.93 & 0.55 $\pm$ 0.02 & 0.31 $\pm$ 0.07 & 0.88 $\pm$ 0.01 \\
    ProtPardelle              & 0.37 $\pm$ 0.01 & 0.36 $\pm$ 0.02 & 0.54 $\pm$ 0.01 & 63.46 $\pm$ 1.90 & 0.42 $\pm$ 0.01 & 0.08 $\pm$ 0.01 & 0.82 $\pm$ 0.00 \\
    ProtPardelle-1c           & 0.44 $\pm$ 0.02 & 0.44 $\pm$ 0.02 & 0.60 $\pm$ 0.01 & 65.92 $\pm$ 0.10 & 0.41 $\pm$ 0.01 & 0.09 $\pm$ 0.02 & 0.81 $\pm$ 0.00 \\
    CarbonNovo                & 0.62 $\pm$ 0.03 & 0.47 $\pm$ 0.01 & 0.75 $\pm$ 0.01 & 78.34 $\pm$ 2.49 & 0.59 $\pm$ 0.01 & 0.44 $\pm$ 0.02 & \textbf{0.80 $\pm$ 0.00} \\
    RFdiffusion + PMPNN & 0.43 $\pm$ 0.01 & 0.43 $\pm$ 0.01 & 0.60 $\pm$ 0.01 & 68.72 $\pm$ 0.52 & 0.61 $\pm$ 0.01 & 0.59 $\pm$ 0.03 & 0.84 $\pm$ 0.00 \\
    MultiFlow                 & 0.72 $\pm$ 0.00 & 0.64 $\pm$ 0.03 & 0.81 $\pm$ 0.02 & 79.39 $\pm$ 0.03 & 0.63 $\pm$ 0.00 & 0.56 $\pm$ 0.03 & 0.83 $\pm$ 0.00 \\
    La-Proteina (no-tri)      & 0.67 $\pm$ 0.02 & \ul{0.77 $\pm$ 0.01} & \ul{0.92 $\pm$ 0.00} & 81.89 $\pm$ 2.92 & \ul{0.63 $\pm$ 0.00} & 0.64 $\pm$ 0.02 & \ul{0.81 $\pm$ 0.01} \\
    La-Proteina (tri)         & \ul{0.77 $\pm$ 0.02} & \textbf{0.84 $\pm$ 0.04} & \textbf{0.95 $\pm$ 0.00} & \textbf{85.32 $\pm$ 2.58} & 0.59 $\pm$ 0.00 & 0.36 $\pm$ 0.01 & 0.85 $\pm$ 0.00 \\
    DPLM2                     & 0.31 $\pm$ 0.01 & 0.35 $\pm$ 0.02 & 0.49 $\pm$ 0.01 & \ul{83.69 $\pm$ 2.37} & 0.63 $\pm$ 0.02 & 0.49 $\pm$ 0.01 & 0.96 $\pm$ 0.00 \\
    ESM3 (seq$\to$str)        & 0.10 $\pm$ 0.02 & 0.14 $\pm$ 0.03 & 0.24 $\pm$ 0.02 & 60.89 $\pm$ 0.69 & 0.62 $\pm$ 0.01 & 0.70 $\pm$ 0.04 & 0.90 $\pm$ 0.01 \\
    ESM3 (str$\to$seq)        & 0.04 $\pm$ 0.01 & 0.04 $\pm$ 0.02 & 0.08 $\pm$ 0.02 & 61.10 $\pm$ 1.02 & 0.62 $\pm$ 0.02 & \textbf{0.81 $\pm$ 0.07} & 0.90 $\pm$ 0.01 \\
    \midrule
    \rowcolor{gray!20}\ours   & \textbf{0.79 $\pm$ 0.01} & 0.72 $\pm$ 0.00 & 0.91 $\pm$ 0.01 & 80.14 $\pm$ 0.31 & 0.63 $\pm$ 0.00 & 0.48 $\pm$ 0.04 & 0.83 $\pm$ 0.00 \\
    \bottomrule
\end{tabular}
}
\end{table}

\begin{table}[htbp]
\centering
\caption{\textbf{Secondary-structure composition of generated proteins.} Helix and strand percentages are reported as mean $\pm$ sample std across three independent runs.}
\label{tab:appendix_secondary_structure}
\small
\begin{tabular}{lcc}
    \toprule
    \textbf{Method} & Helix (\%) & Strand (\%) \\
    \midrule
    ProteinGenerator          & 62.09 $\pm$ 0.70 & 4.20 $\pm$ 0.32 \\
    ProtPardelle              & 52.51 $\pm$ 0.32 & 12.80 $\pm$ 0.06 \\
    ProtPardelle-1c           & 47.67 $\pm$ 0.53 & 17.83 $\pm$ 0.32 \\
    CarbonNovo                & 53.65 $\pm$ 0.53 & 22.26 $\pm$ 0.25 \\
    RFdiffusion + ProteinMPNN & 52.30 $\pm$ 0.53 & 26.70 $\pm$ 0.50 \\
    MultiFlow                 & 73.04 $\pm$ 0.87 & 10.92 $\pm$ 0.77 \\
    La-Proteina (no-tri)      & 70.82 $\pm$ 0.36 & 7.03 $\pm$ 0.15 \\
    La-Proteina (tri)         & 70.65 $\pm$ 0.71 & 6.69 $\pm$ 0.31 \\
    DPLM2                     & 40.95 $\pm$ 0.63 & 20.17 $\pm$ 0.57 \\
    ESM3 (seq$\to$str)        & 70.18 $\pm$ 0.87 & 4.66 $\pm$ 0.38 \\
    ESM3 (str$\to$seq)        & 61.56 $\pm$ 1.05 & 8.94 $\pm$ 0.41 \\
    \midrule
    \rowcolor{gray!20}\ours   & 72.19 $\pm$ 1.39 & 11.64 $\pm$ 1.00 \\
    \bottomrule
\end{tabular}
\end{table}

\begin{table}[htbp]
\centering
\caption{{Secondary-structure-stratified designability analysis.} We report the fraction of generated samples together with class-wise co-designability, with mean and std values calculated across three independent runs.}
\label{tab:appendix_secondary_structure_stratified}
\resizebox{\linewidth}{!}{
\begin{tabular}{lcccccccc}
    \toprule
    \multirow{2.5}{*}{\textbf{Method}} & \multicolumn{2}{c}{\textbf{Mainly} $\alpha$} & \multicolumn{2}{c}{\textbf{Mainly $\beta$}} & \multicolumn{2}{c}{\textbf{Mixed $\alpha/\beta$}} & \multicolumn{2}{c}{\textbf{Other}} \\
    \cmidrule(lr){2-3} \cmidrule(lr){4-5} \cmidrule(lr){6-7} \cmidrule(lr){8-9}
     & Frac. & Des. & Frac. & Des. & Frac. & Des. & Frac. & Des. \\
    \midrule
    ProteinGenerator          & 0.62 $\pm$ 0.01 & 0.16 $\pm$ 0.01 & 0.00 $\pm$ 0.00 & 0.00 & 0.12 $\pm$ 0.01 & 0.11 $\pm$ 0.04 & 0.26 $\pm$ 0.01 & 0.00 $\pm$ 0.00 \\
    ProtPardelle              & 0.38 $\pm$ 0.01 & 0.13 $\pm$ 0.01 & 0.02 $\pm$ 0.00 & 0.30 $\pm$ 0.05 & 0.50 $\pm$ 0.01 & 0.62 $\pm$ 0.01 & 0.10 $\pm$ 0.00 & 0.06 $\pm$ 0.05 \\
    ProtPardelle-1c           & 0.22 $\pm$ 0.01 & 0.60 $\pm$ 0.02 & 0.05 $\pm$ 0.00 & 0.09 $\pm$ 0.09 & 0.53 $\pm$ 0.01 & 0.55 $\pm$ 0.04 & 0.19 $\pm$ 0.01 & 0.02 $\pm$ 0.02 \\
    CarbonNovo                & 0.17 $\pm$ 0.02 & 0.82 $\pm$ 0.02 & 0.06 $\pm$ 0.01 & 0.60 $\pm$ 0.04 & 0.58 $\pm$ 0.02 & 0.62 $\pm$ 0.03 & 0.19 $\pm$ 0.01 & 0.47 $\pm$ 0.07 \\
    RFdiffusion + PMPNN & 0.17 $\pm$ 0.01 & 0.88 $\pm$ 0.05 & 0.01 $\pm$ 0.00 & 0.50 $\pm$ 0.17 & 0.76 $\pm$ 0.02 & 0.34 $\pm$ 0.01 & 0.06 $\pm$ 0.01 & 0.34 $\pm$ 0.07 \\
    MultiFlow                 & 0.44 $\pm$ 0.02 & 0.74 $\pm$ 0.02 & 0.01 $\pm$ 0.00 & 0.69 $\pm$ 0.34 & 0.30 $\pm$ 0.04 & 0.82 $\pm$ 0.01 & 0.26 $\pm$ 0.02 & 0.59 $\pm$ 0.03 \\
    La-Proteina (no-tri)      & 0.69 $\pm$ 0.01 & 0.65 $\pm$ 0.02 & 0.02 $\pm$ 0.00 & 0.19 $\pm$ 0.22 & 0.16 $\pm$ 0.01 & 0.78 $\pm$ 0.04 & 0.13 $\pm$ 0.01 & 0.72 $\pm$ 0.03 \\
    La-Proteina (tri)         & 0.69 $\pm$ 0.01 & 0.74 $\pm$ 0.02 & 0.01 $\pm$ 0.00 & 0.13 $\pm$ 0.13 & 0.20 $\pm$ 0.00 & 0.91 $\pm$ 0.03 & 0.09 $\pm$ 0.01 & 0.72 $\pm$ 0.10 \\
    DPLM2                     & 0.10 $\pm$ 0.01 & 0.36 $\pm$ 0.04 & 0.03 $\pm$ 0.00 & 0.28 $\pm$ 0.11 & 0.66 $\pm$ 0.01 & 0.30 $\pm$ 0.03 & 0.21 $\pm$ 0.01 & 0.31 $\pm$ 0.04 \\
    ESM3 (seq$\to$str)        & 0.73 $\pm$ 0.02 & 0.08 $\pm$ 0.02 & 0.02 $\pm$ 0.01 & 0.06 $\pm$ 0.10 & 0.07 $\pm$ 0.01 & 0.34 $\pm$ 0.05 & 0.17 $\pm$ 0.02 & 0.05 $\pm$ 0.05 \\
    ESM3 (str$\to$seq)        & 0.47 $\pm$ 0.03 & 0.06 $\pm$ 0.01 & 0.01 $\pm$ 0.01 & 0.00 $\pm$ 0.00 & 0.17 $\pm$ 0.01 & 0.08 $\pm$ 0.03 & 0.34 $\pm$ 0.02 & 0.01 $\pm$ 0.01 \\
    \midrule
    \rowcolor{gray!20}\ours   & 0.51 $\pm$ 0.02 & 0.76 $\pm$ 0.01 & 0.04 $\pm$ 0.01 & 0.75 $\pm$ 0.10 & 0.25 $\pm$ 0.01 & 0.85 $\pm$ 0.04 & 0.20 $\pm$ 0.00 & 0.83 $\pm$ 0.01 \\
    \bottomrule
\end{tabular}
}
\end{table}

\section{Visualization of Biomolecular Temporal Coupling}

\paragraph{Analysis of Learned Dynamics}

We visualize the learned temporal coupling for biomolecules in Fig.~\ref{fig:scheduler_comparison} and \ref{fig:more_comparison}. 
It can be seen that the structure modality is prioritized for both macromolecules (proteins) and small molecule ligands, characterized by the trajectory bowing toward the structure axis ($t_{\text{struct}} > t_{\text{seq}}$). 
This indicates that the sequence modality follows a delayed schedule, accelerating only after the coarse-grained structure takes shape.

We interpret this emergent behavior as the model identifying the most efficient path for entropy reduction. 
Geometrically, the structure determines the spatial boundary and interaction constraints as the scaffold, which contributes the most to reducing the global uncertainty of the joint distribution.
Therefore, the optimal transport plan first focuses on transporting the noisy prior to a \textit{proto-structural manifold} to establish a valid geometric context.
Once the spatial backbone is sufficiently resolved, the sequence-structure interplay becomes tractable given the informative geometric context, and then sequence identities are smoothly decoded to satisfy the chemical environments imposed by the local geometry. 
This \textit{structure-first, sequence-painting} mechanism aligns with the physical intuition of inverse folding, correcting the inductive bias of synchronous generation where the model struggles to predict sequence identities without a stabilized structural context.

\section{Sensitivity Analysis}

\paragraph{Sensitivity to Inner-loops Steps $S$ and Mixture-rate $p$}

We evaluated $S \in \{100, 400, 1600\}$, and performance is largely insensitive to $S$ as in Table~\ref{tab:sensitivity_S}, suggesting that the bottleneck is not exact inner-loop convergence, but whether the outer loop identifies a good low-cost direction. We evaluated $p \in \{0.1,0.2,0.5,0.8\}$ as in Table~\ref{tab:sensitivity_p}, and observed some variation across metrics. The role is to balance coupling guidance vs. coverage without changing the underlying objective: small $p$ weakens coupling signal (close to random), while large $p$ over-constrains training to the current candidate coupling. Intermediate $p$ achieves the best tradeoff, supporting that the mixed policy is a practical approximation for optimizing temporal coupling under finite compute.

\begin{figure*}[ht] 
    \centering
    
    \begin{minipage}[c]{0.48\textwidth}
        \centering
        \captionof{table}{Sensitivity to inner-loop steps $S$.}
        \label{tab:sensitivity_S}
        
        \resizebox{\linewidth}{!}{%
            \begin{tabular}{l|ccc|cc|cc}
                \toprule
                \multirow{2}{*}{\textbf{S}} & {Connected} & {QED} & {SA} & \multicolumn{2}{c|}{{Vina Score ($\downarrow$)}} & \multicolumn{2}{c}{{Vina Min ($\downarrow$)}} \\
                 & ($\%, \uparrow$) & ($\uparrow$) & ($\uparrow$) & Mean & Median & Mean & Median  \\ \midrule
                100  & 95.0 & 0.55 & \textbf{0.76} & -7.04 & -7.08 & -7.42 & -7.35 \\
                400  & 95.0 & 0.55 & \textbf{0.76} & \textbf{-7.31} & \textbf{-7.43} & \textbf{-7.72} & \textbf{-7.55} \\
                1600 & 95.0 & \textbf{0.56} & \textbf{0.76} & -7.08 & -7.08 & -7.44 & -7.20 \\ 
                \bottomrule
            \end{tabular}%
        }
    \end{minipage}%
    \hfill 
    \begin{minipage}[c]{0.46\textwidth}
        \centering
        \captionof{table}{Sensitivity to parameter $p$.}
        \label{tab:sensitivity_p}
        
        \resizebox{\linewidth}{!}{%
            \begin{tabular}{l|cc|cc|cc|cc}
                \toprule
                \multirow{2}{*}{\textbf{p}} & {Connected} & {PB-Valid} & {QED} & {SA} & \multicolumn{2}{c|}{{Vina Score ($\downarrow$)}} & \multicolumn{2}{c}{{Vina Min ($\downarrow$)}} \\
                 & ($\%, \uparrow$) & ($\%, \uparrow$) & ($\uparrow$) & ($\uparrow$) & Avg. & Med. & Avg. & Med.  \\ \midrule
                0.1  & 96.0 & 95.0 & 0.56 & 0.75 & -6.58 & -7.10 & -7.19 & -7.31 \\
                0.2  & 95.0 & 94.0 & 0.55 & 0.76 & \textbf{-7.31} & \textbf{-7.43} & \textbf{-7.72} & \textbf{-7.55} \\
                0.5  & 92.0 & 96.0 & 0.55 & 0.78 & -7.07 & -7.29 & -7.48 & -7.42 \\ 
                0.8  & 90.0 & 95.0 & 0.52 & 0.74 & -6.39 & -6.55 & -6.80 & -6.72 \\ 
                \bottomrule
            \end{tabular}%
        }
    \end{minipage}
\end{figure*}

\paragraph{GP Buffer Size and Kernel Choices}

We ran ablations on buffer size and GP kernels, and results are stable across settings as in Table~\ref{tab:sensitivity_buffer} and Table~\ref{tab:sensitivity_kernel}. This indicates the GP does not need to reconstruct the full cost surface precisely; it only needs to capture the topology of low-cost regions well enough to guide the path search on the loss surface, since the outer loop primarily requires a correct ordering of low-cost regions rather than exact values. We further visualize the sampled time points and observe broad, space-filling coverage at Fig.~\ref{fig:gp}

\begin{figure*}[ht] 
    \centering
    
    \begin{minipage}[c]{0.48\textwidth}
        \centering
        \captionof{table}{Sensitivity to Buffer Size.}
        \label{tab:sensitivity_buffer}
        
        \resizebox{\linewidth}{!}{%
            \begin{tabular}{l|cc|cc|cc|cc}
                \toprule
                \multirow{2}{*}{\textbf{Buffer Size}} & {Connected} & {PB-Valid} & {QED} & {SA} & \multicolumn{2}{c|}{{Vina Score ($\downarrow$)}} & \multicolumn{2}{c}{{Vina Min ($\downarrow$)}} \\
                 & ($\%, \uparrow$) & ($\%, \uparrow$) & ($\uparrow$) & ($\uparrow$) & Avg. & Med. & Avg. & Med.  \\ \midrule
                500  & 94.0 & 95.0 & 0.55 & 0.79 & -7.35 & -7.09 & -7.46 & -7.22 \\
                1000 & 95.0 & 94.0 & 0.55 & 0.76 & -7.31 & \textbf{-7.43} & \textbf{-7.72} & \textbf{-7.55} \\
                2000 & 94.0 & 95.0 & 0.53 & 0.79 & \textbf{-7.40} & -7.21 & -7.52 & -7.37 \\ 
                \bottomrule
            \end{tabular}%
        }
    \end{minipage}%
    \hfill 
    \begin{minipage}[c]{0.48\textwidth}
        \centering
        \captionof{table}{Sensitivity to GP Kernel.}
        \label{tab:sensitivity_kernel}
        
        \resizebox{\linewidth}{!}{%
            \begin{tabular}{l|cc|cc|cc|cc}
                \toprule
                \multirow{2}{*}{\textbf{GP Kernel}} & {Connected} & {PB-Valid} & {QED} & {SA} & \multicolumn{2}{c|}{{Vina Score ($\downarrow$)}} & \multicolumn{2}{c}{{Vina Min ($\downarrow$)}} \\
                 & ($\%, \uparrow$) & ($\%, \uparrow$) & ($\uparrow$) & ($\uparrow$) & Avg. & Med. & Avg. & Med.  \\ \midrule
                RBF & 95.0 & 94.0 & 0.55 & 0.76 & -7.31 & -7.43 & -7.72 & -7.55 \\
                RationalQuadratic & 94.0 & 96.0 & 0.56 & 0.76 & -6.92 & -7.22 & -7.51 & -7.34 \\ 
                \bottomrule
            \end{tabular}%
        }
    \end{minipage}
\end{figure*}

\begin{figure}[htbp]
    \centering
        \includegraphics[width=0.5\linewidth]{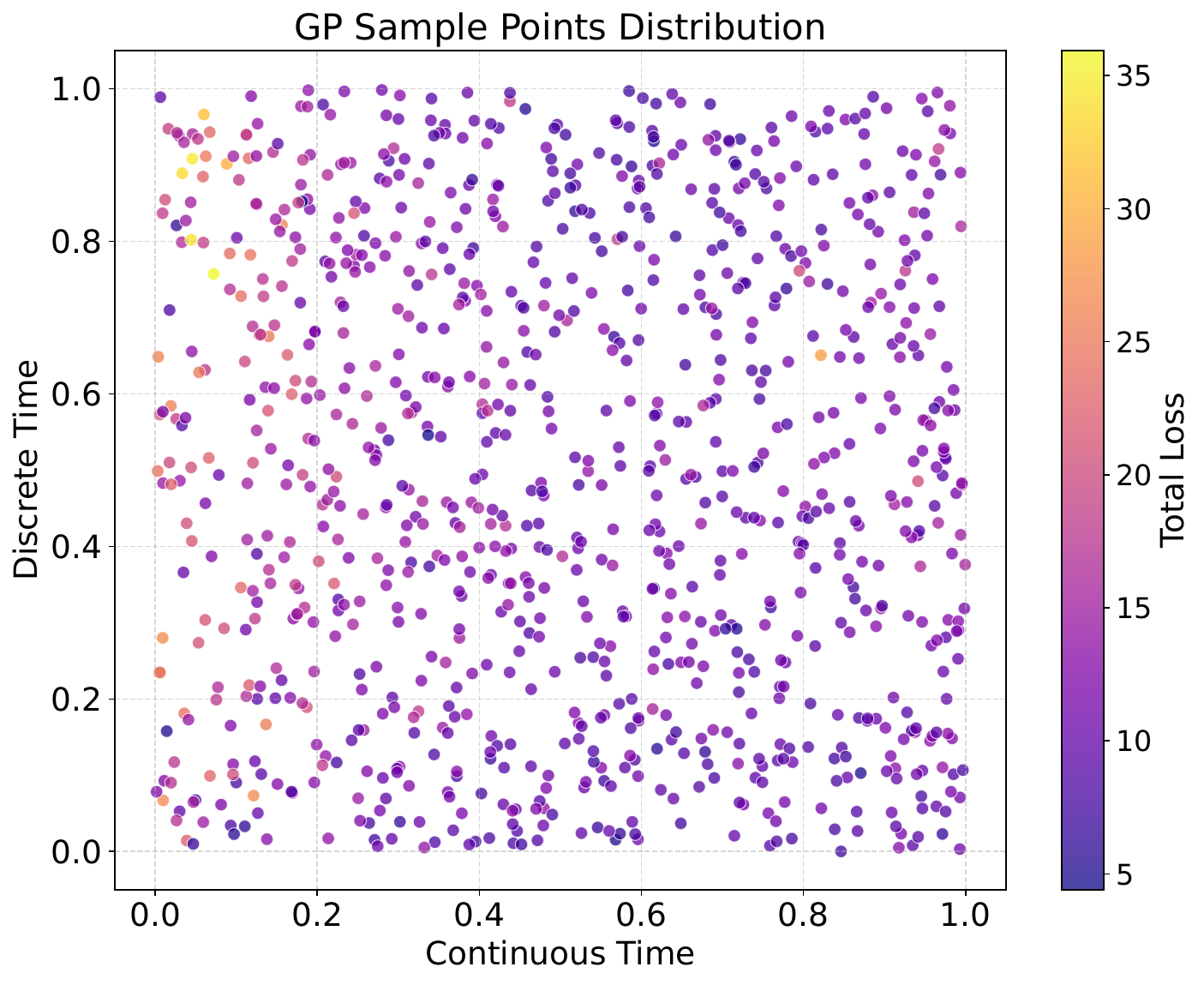}
        \caption{Planar distribution of the 1000 sampled points during the first fixed-coupling stage. The $x$-axis and $y$-axis represent Continuous Time and Discrete Time, respectively, with points colored by their corresponding Total Loss. The plot exhibits broad, space-filling coverage across the entire $[0, 1] \times [0, 1]$ search space, confirming that the Gaussian Process (GP) initialization effectively explores the parameter space without premature clustering or sampling bias.}
        \label{fig:gp}
\end{figure}

\section{Limitation}
While our results show that optimized temporal coupling consistently boosts performance for multimodal generation, our current study is focused on a clean bimodal setting, with unconditional protein co-design as the main testbed and structure-based drug design as a complementary domain, though our formulation naturally extends to more than two modalities through higher-dimensional temporal couplings. Additionally, we do not yet study conditional protein design settings such as motif scaffolding, where additional constraints may interact with the learned coupling in nontrivial ways. More broadly, although the recurring structure-leading behavior observed across both domains is encouraging, its scope remains to be characterized in regimes with different modality balances, such as intrinsically disordered proteins (IDPs) or sequence-constrained binding motifs. 











\end{document}